% !TEX root = lpv-ss_general_kalman.tex	
% !TeX spellcheck = en_GB

\documentclass[journal]{IEEEtran}

\usepackage[english]{babel}															% English language/hyphenation
\usepackage{graphics} % for pdf, bitmapped graphics files
\usepackage{mathtools} % assumes new font selection scheme installed
\usepackage{times} % assumes new font selection scheme installed
\usepackage{amsmath} % assumes amsmath package installed
\usepackage{amssymb}  % assumes amsmath package installed
\usepackage{amsthm}
\usepackage{amsfonts}%
\usepackage{euscript}
\usepackage{mathrsfs}
	% necessary to be able to load enumitem for IEEE document class
\usepackage[inline]{enumitem}
\usepackage{cite}
\usepackage[normalem]{ulem}
%\usepackage[
%  bookmarks=false,
%  pdfpagelabels=false,
%  hyperfootnotes=false,
%  hyperindex=false,
%  pageanchor=false,
%  implicit=false,
%  colorlinks=false,
%]{hyperref}
%\usepackage{tikz}
%
%\usetikzlibrary{shadows.blur,positioning,backgrounds,fit} %external
%%\tikzexternalize[prefix=tikz/]
%\usepackage{pgfplots}
%\pgfplotsset{compat=newest} 
%\pgfplotsset{plot coordinates/math parser=false}

\newcommand{\NX}{{n_\mathrm{x}}}

\newcommand{\NY}{{n_\mathrm{y}}}

\newcommand{\NPSI}{{n_{\psi}}}			% Additional {} such that it can be used in super and subscript

\newcommand{\sU}{\mathbb{U}}
\newcommand{\sP}{\mathbb{P}}
\newcommand{\sX}{\mathbb{X}}
\newcommand{\sY}{\mathbb{Y}}
\newcommand{\sZ}{\mathbb{Z}}

\newcommand{\Ru}{\mathbb{R}^{n_\mathrm{u}}}
\newcommand{\Rp}{\mathbb{R}^{n_\mathrm{p}}}
\newcommand{\Rx}{\mathbb{R}^{n_\mathrm{x}}}
\newcommand{\Ry}{\mathbb{R}^{n_\mathrm{y}}}

\newcommand{\ind}[1]{{[#1]}}

\newcommand{\Qv}{\EuScript{Q}}
\newcommand{\Sv}{\EuScript{S}}
\newcommand{\Rv}{\EuScript{R}}

\newcommand{\expct}{\mathbb{E}}

\newcommand{\var}{\mathrm{var}}
\newcommand{\cov}{\mathrm{cov}}

\newcommand{\Tr}{{\mathrm{Tr}}}
\newcommand{\eye}{{\mathrm{I}}}

\newcommand{\Afnc}{\mathcal{A}}
\newcommand{\Bfnc}{\mathcal{B}}
\newcommand{\Cfnc}{\mathcal{C}}
\newcommand{\Dfnc}{\mathcal{D}}
\newcommand{\Gfnc}{\mathcal{G}}
\newcommand{\Hfnc}{\mathcal{H}}
\newcommand{\Kfnc}{\mathcal{K}}
\newcommand{\Pfnc}{\mathcal{P}}
\newcommand{\Rfnc}{\mathcal{R}}

\newtheorem{lem}{Lemma}
\newtheorem{thm}{Theorem}
\newtheorem{exmp}{Example}
\newtheorem{conj}{Conjecture}

% *** GRAPHICS RELATED PACKAGES ***
%
\ifCLASSINFOpdf
  % \usepackage[pdftex]{graphicx}
  % declare the path(s) where your graphic files are
  % \graphicspath{{../pdf/}{../jpeg/}}
  % and their extensions so you won't have to specify these with
  % every instance of \includegraphics
  % \DeclareGraphicsExtensions{.pdf,.jpeg,.png}
\else
  % or other class option (dvipsone, dvipdf, if not using dvips). graphicx
  % will default to the driver specified in the system graphics.cfg if no
  % driver is specified.
  % \usepackage[dvips]{graphicx}
  % declare the path(s) where your graphic files are
  % \graphicspath{{../eps/}}
  % and their extensions so you won't have to specify these with
  % every instance of \includegraphics
  % \DeclareGraphicsExtensions{.eps}
\fi

\begin{document}

\title{On the Connection Between Different Noise Structures for LPV-SS Models}

\author{Pepijn~B.~Cox and % ~\IEEEmembership{Member,~IEEE,}
        Roland~T\'{o}th%~\IEEEmembership{Member,~IEEE,}
%        and~Mih\'{a}ly~Petreczky,~\IEEEmembership{Member,~IEEE}% <-this % stops a space
\thanks{P.B. Cox and R. T\'{o}th are with the Control Systems Group, Department of Electrical Engineering, Eindhoven University of Technology, P.O. Box 513, 5600 MB Eindhoven, The Netherlands. E-mail: \{p.cox,r.toth\}@tue.nl.}% <-this % stops a space
%\thanks{M. Petreczky is with the Centre de Recherche en Informatique, Signal et Automatique de Lille (CRIStAL). Lille 1 University, M3, Avenue Carl Gauss, 59650 Villeneuve-d'Ascq, France. E-mail:mihaly.petreczky@ec-lille.fr.}% <-this % stops a space
%\thanks{Manuscript received September 5, 2016; revised August 26, 2015.}
}

% The paper headers
%\markboth{IEEE TRANSACTIONS ON AUTOMATIC CONTROL,~VOL.~14, NO.~8, AUGUST~2015}%
%{Cox \MakeLowercase{\textit{et al.}}: On the Connection Between Different Noise Structures for LPV-SS Systems}
% The only time the second header will appear is for the odd numbered pages
% after the title page when using the twoside option.

\maketitle

\begin{abstract}
Different representations to describe noise processes and finding connections or equivalence between them have been part of active research for decades, in particular for linear time-invariant case. In this paper the linear parameter-varying (LPV) setting is addressed; starting with the connection between an LPV state-space (SS) representation with a general noise structure and the LPV-SS model in an innovation structure, i.e., the Kalman filter. More specifically, the considered LPV-SS representation with general noise structure has static, affine dependence on the scheduling signal; however, we show that its companion innovation structure has a dynamic, rational dependency structure. Following, we would like to highlight the consequences of approximating this Kalman gain by a static, affine dependency structure. To this end, firstly, we use the ``fading memory'' effect of the Kalman filter to reason how the Kalman gain can be approximated to depend only on a partial trajectory of the scheduling signal. This effect is shown by proving an asymptotically decreasing error upper bound on the covariance matrix associated to the innovation structure in case the covariance matrix is subjected to an incorrect initialization or disturbance. Secondly, we show by an example that an LPV-SS representation that has dynamical, rational dependency on the scheduling signal can be transformed into static, affinely dependent representation by introducing additional states. Therefore, an approximated Kalman gain can, in some cases, be represented by a static, affine Kalman gain at the cost of additional states.
\end{abstract}

% Note that keywords are not normally used for peerreview papers.
\begin{IEEEkeywords}
Linear parameter-varying system, state-space representation, innovation form, Kalman filter.
\end{IEEEkeywords}

% For peer review papers, you can put extra information on the cover
% page as needed:
% \ifCLASSOPTIONpeerreview
% \begin{center} \bfseries EDICS Category: 3-BBND \end{center}
% \fi
%
% For peerreview papers, this IEEEtran command inserts a page break and
% creates the second title. It will be ignored for other modes.
\IEEEpeerreviewmaketitle

\section{Introduction}

%\pepijn{Position the paper from the identification point of view. In the LTI case there the connection between innovation an static gain well know. Many schemes use this ``simplified'' gain (projection SIDs). Based upon the LTI case, many extensions are known to linear systems with varying parameters. However, the question arises if these properties also hold for linear time-varying or linear parameter-varying systems. Answer comes in this paper}

Including general representations of noise processes in system identification is essential for capturing a wide variety of possible noise sources experienced in practice, e.g., unmodelled dynamics, sensor noise, parameter inaccuracies, etc. Hence, active research on different representations, their generality, and connections between them has been going on for decades. Especially, the \textit{linear time invariant} (LTI) case has a well established connection between an LTI \textit{state-space} (SS) representation with state and output additive noise and the innovation form, i.e., the Kalman filter (e.g., see~\cite{Anderson1979} and the references therein). In this case, the Kalman filter is asymptotically time invariant, therefore, a suboptimal filter can be found with a constant Kalman matrix.

To the authors knowledge, similar time invariant Kalman filters for \textit{linear parameter-varying} (LPV), time-varying, or nonlinear systems does not exists. Except~\cite{Petreczky2016}, for a stochastic jump-Markov linear system a Kalman gain that has affine dependency on the switching signal can be found, under the assumption that the system is quadratic stabilty. Contrary to the LTI case, having quadratic stability for LPV or jump-Markov linear systems is only a sufficient condition, i.e., ristrictive condition for stability (e.g., see~\cite{Boyd1989}), and, therefore, the result of~\cite{Petreczky2016} does not apply to every stable LPV or jump-Markov linear system. However, in this paper such strong assumption on stability of the system is not required. Hence, we will treat a more general case.

In this paper, we start by providing the connection between the LPV-SS representation with general noise and its LPV-SS innovation structure companion (Sec.~\ref{sec:preliminaries}). When moving from the general noise structure with affine, static dependency on the scheduling signal to the innovation form, the resulting error state covariance matrix function, Kalman gain, and the covariance of the innovation noise will have dynamic, rational dependency on the scheduling signal. Then, based upon the stability result of~\cite{Deyst1968}, we show that the innovation recursion can recover from an incorrect initialization or disturbance with a guaranteed asymptotic convergence (Sec.~\ref{sec:asymptoticConvergence}). This guaranteed convergence implies a ``fading memory'' effect within the innovation recursion and it is used to argue that the Kalman gain can be approximated by only using a partial trajectory of the scheduling signal, in stead of the complete trajectory (Sec.~\ref{sec:apprxKalmanGain}).

% ############################################### %
% ----------------------------------------------- %
% 		The LPV identification problem
% ----------------------------------------------- %
% ############################################### %
\section{Preliminaries} \label{sec:preliminaries}

% ####################################################
% Subsec: Notation
% ####################################################

\subsection*{Notation}
We denote a probability space as $(\Xi,\mathcal{F}_\Xi, \mathbf{P})$ where $\mathcal{F}_\Xi$  is the $\sigma$-algebra, defined over the sample space $\Xi$; and $\mathbf{P}:\mathcal{F}_\Xi\rightarrow[0,1]$ is the probability measure defined over the measurable space $(\Xi,\mathcal{F}_\Xi)$. Within this work, we consider random variables that take values on the Euclidean space. More precisely, for the given probability space  $(\Xi,\mathcal{F}_\Xi, \mathbf{P})$ we define a \textit{random variable} $\mathbf f$ as a measurable function $\mathbf f:\Xi\rightarrow\mathbb{R}^n$, which induces a probability measure on $(\mathbb{R}^n,\mathscr{B}(\mathbb{R}^n))$. As such, a realization $\nu\in\Xi$ of $\mathbf P$, denoted $\nu\sim\mathbf P$, defines a realization $f$ of $\mathbf f$, i.e., $f:=\mathbf f(\nu) $. Furthermore, a \textit{stochastic process} $\mathbf x$ is a collection of random variables $\mathbf x_t:\Xi\rightarrow\mathbb{R}^n$ indexed by the set $t\in\mathbb{Z}$ (discrete time), given as $\mathbf x=\{ \mathbf x_t : t\in\mathbb{Z} \}$. A realization $\nu\in\Xi$ of the stochastic process defines a signal trajectory $x:=\{ \mathbf x_t(\nu) : t\in\mathbb{Z} \}$. We call a stochastic process $\mathbf x$ \textit{stationary} if $\mathbf x_t$ has the same probability distribution on each time index as $\mathbf x_{t+\tau}$ for all $\tau\in\mathbb{N}$. %In addition, a stationary process consisting of uncorrelated random variables with zero mean and finite variance is called a \textit{white noise process}. 

In addition, the inequalities $A\succeq B$ and $A\succ B$, for two symmetric matrices $A$ and $B$ of equal dimension, imply that $A-B$ is semi-positive and positive definite, respectively.

% ####################################################
% Subsec: The data-generating system with general noise model
% ####################################################
\subsection{The LPV-SS representation with general noise} \label{subsec:Model}

Consider a \textit{multiple-input multiple-output} (MIMO), discrete-time linear parameter-varying data-generating system, defined by the following first-order difference equation, i.e., the LPV-SS representation with general noise model: %\vspace{-2mm}
\begin{subequations} \label{eq:SSrep}
\begin{alignat}{3}
  x_{t+1} &= \Afnc(p_t)&x_t&+\Bfnc(p_t)&&u_t+\Gfnc(p_t)w_t, \label{eq:SSrepState} \\
  y_t &= \Cfnc(p_t)&x_t&+\Dfnc(p_t)&&u_t+\Hfnc(p_t)v_t, \label{eq:SSrepOut}
\end{alignat}
\end{subequations} %\vskip -6mm \noindent 
where $x:\sZ\rightarrow\sX=\Rx$ is the state variable, $y:\sZ\rightarrow\sY=\Ry$ is the measured output signal, $u:\sZ\rightarrow\sU=\Ru$ denotes the input signal, $p:\sZ\rightarrow\sP \subseteq\Rp$ is the scheduling variable, subscript $t\in\mathbb{Z}$ is the discrete time, $w:\sZ\rightarrow\Rx$, $v:\sZ\rightarrow\Ry$ are the sample path realizations of zero-mean stationary noise processes: %\vspace{-2mm}
\begin{equation} \label{eq:noisy}
\left[ \begin{array}{c}
\mathbf w_t \\ \mathbf v_t
\end{array} \right] \sim \mathcal{N}(0,\Sigma),\hspace{1cm} \Sigma = \left[ \begin{array}{cc} \Qv & \Sv \\ \Sv^\top & \Rv \end{array} \right],
\end{equation} %\vskip -7mm \noindent
where $\mathbf w_t:\Xi\rightarrow\sX$, $\mathbf v_t:\Xi\rightarrow\sY$ are random variables of the stochastic process $\mathbf w$ ,$\mathbf v$, respectively, $\Qv\in\mathbb{R}^{\NX\times\NX}$, $\Sv\in\mathbb{R}^{\NX\times\NY}$, and $\Rv\in\mathbb{R}^{\NY\times\NY}$ are covariance matrices, such that $\Sigma$ is positive definite. Furthermore, we will assume $u,p,w,v,y$ to have left compact support to avoid technicalities with initial conditions. The matrix functions $\Afnc(\cdot),...,\Hfnc(\cdot)$, defining the SS representation \eqref{eq:SSrep} are defined as affine combinations: %\vspace{-5mm}
\begin{equation}\label{eq:sysMatrices}
\begin{aligned} 
 \Afnc(p_t)&\!=\!A_0+\hspace{-1mm}\sum_{i=1}^{\NPSI} A_i\psi^\ind{i}(p_t),\hspace{-1mm}&\Bfnc(p_t)&\!=\!B_0+\hspace{-1mm}\sum_{i=1}^{\NPSI} B_i\psi^\ind{i}(p_t) , \\
 \Cfnc(p_t)&\!=\!C_0+\hspace{-1mm}\sum_{i=1}^{\NPSI} C_i\psi^\ind{i}(p_t),\hspace{-1mm}&\Dfnc(p_t)&\!=\!D_0+\hspace{-1mm}\sum_{i=1}^{\NPSI} D_i\psi^\ind{i}(p_t), \\
 \Gfnc(p_t)&\!=\!G_0+\hspace{-1mm}\sum_{i=1}^{\NPSI} G_i\psi^\ind{i}(p_t),\hspace{-1mm}&\Hfnc(p_t)&\!=\!H_0+\hspace{-1mm}\sum_{i=1}^{\NPSI} H_i\psi^\ind{i}(p_t), \\[-2mm]
\end{aligned}
\end{equation}
where $\psi^\ind{i}(\cdot):\sP\rightarrow\mathbb{R}$ are bounded scalar functions on $\sP$ and $\{A_i,B_i,C_i,D_i,G_i,H_i\}_{i=0}^{\NPSI}$  are constant matrices with appropriate dimensions. Additionally, for well-posedness, it is assumed that $\{\psi^\ind{i}\}_{i=1}^{\NPSI}$ are  linearly independent over an appropriate function space and are normalized w.r.t. an appropriate norm or inner product~\cite{Toth2012}.

% ####################################################
% Subsec: LPV-SS innovation from
% ####################################################

\subsection{The innovation form}

 To start, under some mild conditions, the LPV-SS representation~\eqref{eq:SSrep} has the following equivalent innovation form:

%\vspace{-2mm}
\begin{lem} \label{lem:inn} For each given trajectory of the input $u$ and scheduling $p$, the LPV data-generating system~\eqref{eq:SSrep} can be equivalently represented by a $p$-dependent innovation form %\vspace{-2.5mm}
\begin{subequations} \label{eq:SSrepInn}
\begin{alignat}{3}
  \check x_{t+1} &= \Afnc(p_t)&\check x_t&+\Bfnc(p_t)&&u_t+\Kfnc_t\xi_t, \label{eq:SSrepInnState} \\
  y_t &= \Cfnc(p_t)&\check x_t&+\Dfnc(p_t)&&u_t+\xi_t, \label{eq:SSrepInnOut}
\end{alignat} % \vskip -6mm \noindent
where $\boldsymbol\xi_t\sim\mathcal{N}(0,\Omega_t)$ and $\Kfnc_t$ can be uniquely determined by %\vspace{-3mm}
\begin{align}
\Kfnc_t\! &=\! \left[ \Afnc(p_t) \Pfnc_{t\vert t-1} \Cfnc^\top\!\!(p_t) + \Gfnc(p_t)\Sv\Hfnc^\top\!\!(p_t)\right] \Omega_t^{-1}\!\!, \label{eq-lem:kalmanFilter} \\
\Pfnc_{t+1\vert t}\! &=\! \Afnc(p_t) \Pfnc_{t\vert t-1} \Afnc^\top\!\!(p_t)  - \Kfnc_t \Omega_t \Kfnc^\top_t +\nonumber \\ & \quad\quad \Gfnc(p_t) \Qv \Gfnc^\top\!\!(p_t), \label{eq-lem:innErrorCov} \\
\Omega_t\! &=\! \Cfnc(p_t)\Pfnc_{t\vert t-1}\Cfnc^\top\!\!(p_t) + \Hfnc(p_t)\Rv\Hfnc^\top\!\!(p_t), \label{eq-lem:noiseCov}
\end{align} %\vskip -6.5mm \noindent
\end{subequations}
under the assumption that $\exists t_0\in\mathbb{Z}$ such that $x_{t_0}=0$ and $\Omega_t$ is non-singular for all $t\in[t_0,\infty)$. In~\eqref{eq-lem:kalmanFilter}-\eqref{eq-lem:noiseCov}, the notation of $\Kfnc_t$, $P_{t+1\vert t}$, and  $\Omega_t$ is a shorthand for $\Kfnc_t\coloneqq(\Kfnc\diamond p_t)\in\mathscr{R}^{\NX\times\NY}$, $\Pfnc_{t+1\vert t}\coloneqq(P_{t+1\vert t}\diamond p_t)\in\mathscr{R}^{\NX\times\NX}$, and  $\Omega_t\coloneqq(\Omega\diamond p_t)\in\mathscr{R}^{\NY\times\NY}$. The operator $\diamond:(\mathscr{R},\sP^\sZ)\rightarrow\mathbb{R}^\sZ$ denotes $(\Kfnc_t\diamond p_t)=\Kfnc_t(p_{t+\tau_1},\ldots,p_t,\ldots,p_{t-\tau_2})$ with $\tau_1,\tau_2\in\mathbb{Z}$. The subscript notation $_{t+1\vert t}$ denotes that the matrix function at time ${t+1}$ depends only on $p_i$ for $i=t_0,\ldots,t$. \hfill $\blacksquare$
\end{lem}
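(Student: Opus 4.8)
The plan is to fix an arbitrary pair of input and scheduling trajectories $u$ and $p$; once these are frozen, the coefficient maps $\Afnc(p_t),\dots,\Hfnc(p_t)$ in \eqref{eq:SSrep} become known, deterministic, time-varying matrices, so that \eqref{eq:SSrep} collapses to a linear time-varying (LTV) Gaussian state-space model driven by the jointly Gaussian noise \eqref{eq:noisy}. For such a model the one-step-ahead predictor is, by joint Gaussianity, the conditional mean, and the whole construction reduces to the classical time-varying Kalman filter; the novelty lies only in tracking how the resulting matrices inherit their dependence on the scheduling trajectory.

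Concretely, I would define the predictor $\check x_t\coloneqq\expct[\,\mathbf x_t\mid y_{t_0},\dots,y_{t-1},u,p\,]$, the prediction-error covariance $\Pfnc_{t\vert t-1}\coloneqq\expct[(\mathbf x_t-\check x_t)(\mathbf x_t-\check x_t)^\top]$, and the innovation $\boldsymbol\xi_t\coloneqq\mathbf y_t-\Cfnc(p_t)\check x_t-\Dfnc(p_t)u_t$. Substituting \eqref{eq:SSrepOut} gives $\boldsymbol\xi_t=\Cfnc(p_t)(\mathbf x_t-\check x_t)+\Hfnc(p_t)\mathbf v_t$; since the estimation error $\mathbf x_t-\check x_t$ depends only on noise up to time $t-1$ while $\mathbf v_t$ is current, the two are uncorrelated and taking the covariance yields \eqref{eq-lem:noiseCov}. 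The gain follows from the orthogonality principle: the update $\check x_{t+1}=\Afnc(p_t)\check x_t+\Bfnc(p_t)u_t+\Kfnc_t\boldsymbol\xi_t$ must leave $\mathbf x_{t+1}-\check x_{t+1}$ uncorrelated with $\boldsymbol\xi_t$, so $\Kfnc_t$ equals the cross-covariance $\expct[(\mathbf x_{t+1}-\Afnc(p_t)\check x_t-\Bfnc(p_t)u_t)\boldsymbol\xi_t^\top]$ times $\Omega_t^{-1}$. Evaluating this cross-covariance from \eqref{eq:SSrepState}, and here I must retain the coupling $\expct[\mathbf w_t\mathbf v_t^\top]=\Sv$ between process and measurement noise, produces the $\Gfnc(p_t)\Sv\Hfnc^\top(p_t)$ term in \eqref{eq-lem:kalmanFilter}; the Riccati recursion \eqref{eq-lem:innErrorCov} then drops out by expanding $\expct[(\mathbf x_{t+1}-\check x_{t+1})(\mathbf x_{t+1}-\check x_{t+1})^\top]$ and simplifying with the gain identity. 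The assumption $x_{t_0}=0$ together with left compact support initializes the recursion at $\Pfnc_{t_0\vert t_0-1}=0$, and non-singularity of $\Omega_t$ guarantees that $\Kfnc_t$ is uniquely defined.

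To establish the claimed equivalence I would argue that the innovation sequence $\{\boldsymbol\xi_t\}$ is white, zero-mean Gaussian, and that \eqref{eq:SSrepInn} reproduces exactly the conditional distribution of $\mathbf y$ given $u$ and $p$: both representations share the same output second-order statistics because $\boldsymbol\xi_t$ by construction carries all the new information in $\mathbf y_t$, and joint Gaussianity promotes this to distributional equivalence. Finally, the structural statement—that $\Pfnc_{t\vert t-1}$, $\Kfnc_t$, and $\Omega_t$ have the dynamic, rational dependence encoded by the $\diamond$ operator—follows by inspecting the recursion: unrolling \eqref{eq-lem:innErrorCov}--\eqref{eq-lem:noiseCov} from the initialization shows that the time-$t$ matrices are functions of $p_{t_0},\dots,p_{t-1}$ (hence dynamic, justifying the $_{t+1\vert t}$ bookkeeping), while the matrix inverse $\Omega_t^{-1}$ entering \eqref{eq-lem:kalmanFilter}--\eqref{eq-lem:innErrorCov} turns the affine dependence on those scheduling samples into a rational one at every step.

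The main obstacle I anticipate is not any single recursion—these are routine once the LTV reduction is in place—but the careful justification of equivalence and of the independence structure needed to split the covariances: one must verify that $\mathbf x_t-\check x_t$ is indeed uncorrelated with the current noises $\mathbf w_t,\mathbf v_t$ despite the $\Sv$-coupling, which hinges on $\check x_t$ being measurable with respect to the strictly past outputs. Making precise that the two models coincide \emph{for each fixed} $(u,p)$, rather than in some averaged sense, and cleanly formalizing the argument list of $\diamond$ so that the rational dependence is well-defined, is where the proof needs the most care.
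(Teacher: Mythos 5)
Your proposal is correct and takes essentially the same route as the paper's own proof: both freeze the trajectories $(u,p)$ so that \eqref{eq:SSrep} becomes an LTV Gaussian system, define the innovation as the output prediction error, and obtain \eqref{eq-lem:kalmanFilter}--\eqref{eq-lem:noiseCov} from the conditional-mean/orthogonality property of jointly Gaussian variables, keeping the $\Gfnc(p_t)\Sv\Hfnc^\top(p_t)$ cross-covariance term and closing with the same Riccati expansion and the $x_{t_0}=0$ initialization. The only cosmetic difference is that the paper splits $\expct\{\mathbf x_{t+1}\,\vert\,\boldsymbol\xi_0,\ldots,\boldsymbol\xi_t\}$ over the mutually uncorrelated innovations and applies the Gaussian conditional-mean formula explicitly, whereas you invoke the orthogonality principle directly to pin down $\Kfnc_t$ — the resulting covariance computations are identical.
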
 %\vskip -6mm \noindent
\begin{proof}
%The scheduling signal is independent of the noise in the innovation representation~\eqref{eq:SSrepInn}, hence, the innovation system is directly related to a time-varying Kalman filter, e.g., see~\cite{Anderson1979}.
See Appendix.
\end{proof} %\vskip -4mm \noindent

From Lem.~\ref{lem:inn} it becomes clear that moving from the LPV-SS system~\eqref{eq:SSrep} with static, affine dependency to the innovation from comes at the cost of dynamic, rational dependency on the scheduling signal. In Sec.~\ref{sec:asymptoticConvergence}, guaranteed asymptotic convergence of the covariance matrix $\Pfnc_{t+1\vert t}$~\eqref{eq-lem:innErrorCov} is proven if the covariance matrix is perturb by an error in the past. That result is used to argue how the Kalman gain $\Kfnc_t$~\eqref{eq-lem:kalmanFilter} can be approximated by a partial trajectory of the scheduling signal, in Sec.~\ref{sec:apprxKalmanGain}.

% ############################################### %
% ----------------------------------------------- %
% 		Properties of the Kalman gain
% ----------------------------------------------- %
% ############################################### %

\section{Garuenteed Asymptotic Convergence of the Covariance Matrix $\Pfnc_{t+1\vert t}$} \label{sec:asymptoticConvergence}

In this section, we will show that an error created on the priori error covariance matrix $\Pfnc_{t+1\vert t}$~\eqref{eq-lem:innErrorCov} at a certain time will asymptotically decrease to zero when time progresses. %This upper-bound on the error is used in Sec.~\ref{sec:apprxKalmanGain} to argue how the Kalman gain $\Kfnc_t$~\eqref{eq-lem:kalmanFilter} can be approximated by only using a part of the scheduling signals' trajectory.

To this end, let us introduce some technicalities. Firstly, the stochastic processes $\mathbf w$ and $\mathbf v$~\eqref{eq:noisy} need to be uncorrelated, hence using the minimum variance estimate of $\mathbf w_t$ given by $\mathbf{\bar w}_t = \mathbf w_t-\Sv\Rv^{-1}\mathbf v_t$, the state equation~\eqref{eq:SSrepState} is rewritten as (e.g., see~\cite[Section 5.5]{Anderson1979}) %\vspace{-2mm}
\begin{subequations} \label{eq:SSrepTrans}
\begin{align}
x_{t+1} = &\left(\Afnc(p_t) - \Gfnc(p_t)\Sv\Rv^{-1}\Hfnc^{-1}(p_t)\Cfnc(p_t)  \right)x_t \nonumber \\
		  & + \left( \Bfnc(p_t)- \Gfnc(p_t)\Sv\Rv^{-1}\Hfnc^{-1}(p_t)\Dfnc(p_t) \right) u_t  \nonumber \\ 
		  & + \Gfnc(p_t)\Sv\Rv^{-1}\Hfnc^{-1}(p_t) y_t +\Gfnc(p_t) \bar w_t, \label{eq:TVrepresentationStateInd}  
\end{align} %\vskip -8mm \noindent
where %\vspace{-3mm}
\begin{equation} \label{eq:LPVrepresentationNoiseInd}
\left[\begin{array}{c} \mathbf{\bar w}_t\\ \mathbf v_t \end{array} \right] \sim \mathcal{N}\left( \left[\begin{array}{c} 0 \\ 0 \end{array}\right],\left[ \begin{array}{cc} \Qv-\Sv\Rv^{-1}\Sv & 0  \\ 0 & \Rv \end{array}  \right] \right).
\end{equation} %\vskip -7mm \noindent
Define $\bar u^\top_t \coloneqq [\begin{array}{cc} u^\top_t & y^\top_t \end{array}]^\top$, which gives the following scheduling dependent matrices %\vspace{-2.5mm}
\begin{align}
\bar \Afnc_t  &\!=\! \Afnc(p_t) - \Gfnc(p_t)\Sv\Rv^{-1}\Hfnc^{-1}(p_t)\Cfnc(p_t), \\ 
\bar \Bfnc_t  &\!=\! \Big[  \Bfnc(p_t)- \Gfnc(p_t)\Sv\Rv^{-1}\Hfnc^{-1}(p_t)\Dfnc(p_t), \nonumber \\
			  & \qquad\qquad   \Gfnc(p_t)\Sv\Rv^{-1}\Hfnc^{-1}(p_t) \Big]\!, \\
\bar \Qv  & \!=\! \Qv-\Sv\Rv^{-1}\Sv.
\end{align} %\vskip -6.5mm \noindent
Secondly, let $\bar\Bfnc_t$ and $\Dfnc(p_t)$ be bounded and assume that %\vspace{-2.5mm}%, in the sequel, assume that there exists real numbers $\alpha_1,\alpha_2,\beta_1,\beta_2,\gamma_1,\gamma_2$ such that
\begin{align}
\alpha_1 I &\succeq \Gfnc(p_i) \bar\Qv \Gfnc(p_i) \succeq \alpha_2 I, \label{eq:stochasticCtrb} \\
\beta_1 I &\preceq \Cfnc(p_i)^\top(\Hfnc(p_i)\Rv\Hfnc(p_i)^\top)^{-1}\Cfnc(p_i) \preceq \beta_2I, \label{eq:stochasticObsv} \\
\delta_1 I &\succeq \bar\Afnc^\top_i \bar\Afnc_i \succeq \delta_2 I \label{eq:boundedInvertA},
\end{align} %\vskip -7mm \noindent
holds for $i\in\mathbb{T}$, with left compact support $\mathbb{T}$ of the scheduling signal, $\alpha_1,\beta_2,\delta_1>0$, and $\alpha_2,\beta_1,\delta_2<\infty$. Conditions~\eqref{eq:stochasticCtrb}-\eqref{eq:boundedInvertA} imply that the system~\eqref{eq:SSrep} is stochastically controllable and observable for all possible variations of $p\in\sP$, i.e., the state can uniquely be reconstructed, which are not over restrictive assumptions. Define the \textit{posterior} filter error $e_t=x_t-\check x_{t\vert t}$ dynamics by \vspace{-2mm}
\begin{equation} \label{eq:error}
e_{t+1} = (I-\Kfnc_{t+1}\Cfnc(p_{t+1}))\bar\Afnc_te_t = W_t e_t.
\end{equation}%\vskip -4mm \noindent
If~\eqref{eq:stochasticCtrb}-\eqref{eq:boundedInvertA} hold then there exitst a $\beta_3,\beta_5,\beta_6$~\cite{Deyst1968} as %\vspace{-3mm}
\begin{equation}
\!\left\Vert \! \left[ \! \begin{array}{cc} \Cfnc(p_i) & 0 \\ 0 & \Cfnc(p_{i-1}) \end{array}\! \right]\!\!\left[ \begin{array}{c}  \bar\Afnc_{i-1} \bar\Afnc_{i-2} \\ \Afnc_{i-1} \end{array} \right] e_{i-2}  \right\Vert_2 \!\!\!\geq\! \beta_3\! \left\Vert e_{i-2} \right\Vert_2\!,
\end{equation} %\vskip -6mm \noindent
%and
\begin{equation}
\left[ \begin{array}{cc} \Rv & 0 \\ 0 & \Rv \end{array} \right] + L_i \left[ \begin{array}{cc} \Pfnc_{i\vert i-1} & 0 \\ 0 & \Pfnc_{i-1\vert i-2} \end{array} \right] L_i^\top \preceq \beta_5 I,
\end{equation} %\vskip -2mm \noindent
with %\vspace{-2mm}
\begin{equation*}
L_i = \left[ \begin{array}{cc} \Cfnc(p_i) & \Cfnc(p_i) \bar\Afnc_{i-1} \\ 0 & \Cfnc_{i-1} \end{array} \right],
\end{equation*} %\vskip -2mm \noindent
and %\vspace{-1mm}
\begin{equation}
\left\Vert W^{-1}_{i-2} W^{-1}_{i-1} e_i \right\Vert_2 \geq \beta_6 \left\Vert e_i \right\Vert_2, \label{eq:cntBeta6}
\end{equation} %\vskip -3mm \noindent
for $i=\mathbb{T}$ where $0<\beta_3,\beta_5,\beta_6<\infty$. Using~\eqref{eq:stochasticCtrb}-\eqref{eq:cntBeta6}, the error created on the priori error covariance matrix $\Pfnc_{t+1\vert t}$~\eqref{eq-lem:innErrorCov} at a certain time will asymptotically decrease to zero as:
\end{subequations}

\begin{thm} \label{thm:boundErrorCov}
\begin{subequations}
Let $\bar\Bfnc_t$ and $\Dfnc(p_t)$ be bounded and~\eqref{eq:stochasticCtrb}-\eqref{eq:boundedInvertA} hold. Assume that $\Pfnc_{t\vert t-1}$ is a positive semi definite matrix function of $t$, which has a left compact support $\mathbb{T}$ and satisfies~\eqref{eq-lem:kalmanFilter}-\eqref{eq-lem:noiseCov} for a given trajectory of $p$ (with left compact support). Let us consider that there exists a $\tau>0$ and an associated $\hat\Pfnc^{(\tau)}_{t\vert t-1}$ constructed with the recursions of~\eqref{eq-lem:innErrorCov} initialized at time $t-\tau$ with %\vspace{-1mm}
\begin{equation} \label{eq-thm:initErrCov}
\hat\Pfnc^{(\tau)}_{t-\tau\vert t-\tau-1} = \bar\Afnc_t\hat P_{0,\tau}\bar\Afnc_t^\top + \bar \Qv
\end{equation}%\vskip -7mm \noindent
where $\hat P_{0,\tau}$ is a static matrix defined as %\vspace{-2mm}
\begin{equation} \label{eq-thm:condErrorCov}
0\prec\hat P_{0,\tau} \prec \left( \frac{\alpha_2}{\alpha_2\beta_2+1} \right) I.
\end{equation} %\vskip -3mm \noindent
Then, the difference between $\Pfnc_{t\vert t-1}$ and $\hat\Pfnc^{(\tau)}_{t\vert t-1}$ has the following bound %\vspace{-2mm}
\begin{equation} \label{eq-thm:maxCovarianceBound}
\max_{t\in\mathbb{T}}\!\left\Vert\Pfnc_{t\vert t-1}\!-\!\hat \Pfnc^{(\tau)}_{t\vert t-1}\right\Vert_2  \!\!\leq \xi^\tau \frac{\delta_1\NX(\alpha_1\beta_1\!+\! 1)^2 (\alpha_2\beta_2\! +\! 1)}{\alpha_2\beta_1^2}\!,
\end{equation} %\vskip -3mm \noindent
where $\xi\in(0,1)$ and given as %\vspace{-3mm}
\begin{equation*}
\xi  = \frac{\beta_5(\alpha_2\beta_2+1)}{\beta_5(\alpha_2\beta_2+1) + \alpha_2\beta_3\beta_6}
\end{equation*} \vskip -4mm \noindent \hfill $\blacksquare$
\end{subequations}
\end{thm}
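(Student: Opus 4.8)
The plan is to show that the gap between the two Riccati solutions obeys a homogeneous, exponentially contracting linear recursion that is driven only by its mismatch at the initialization time $t-\tau$, and then to turn the Deyst--Price regularity inequalities \eqref{eq:stochasticCtrb}--\eqref{eq:cntBeta6} into an explicit per-step contraction ratio. First I would derive the propagation law for the difference $\Delta_t \coloneqq \Pfnc_{t\vert t-1} - \hat\Pfnc^{(\tau)}_{t\vert t-1}$. Both $\Pfnc_{t\vert t-1}$ and $\hat\Pfnc^{(\tau)}_{t\vert t-1}$ satisfy the same recursion \eqref{eq-lem:kalmanFilter}--\eqref{eq-lem:noiseCov} along the common trajectory of $p$, differing only in their value at $t-\tau$. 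Subtracting the two recursions and applying the standard matrix identity for the difference of the quadratic innovation-update terms, the driving contributions $\Gfnc(p_t)\Qv\Gfnc^\top(p_t)$ and $\Hfnc(p_t)\Rv\Hfnc^\top(p_t)$ cancel, leaving a Sylvester-type recursion
\[
\Delta_{t+1} = \mathcal{F}_t\,\Delta_t\,\hat{\mathcal{F}}_t^\top,
\]
where $\mathcal{F}_t$ and $\hat{\mathcal{F}}_t$ are the closed-loop transition matrices assembled from $\bar\Afnc_t$ and the two respective gains; up to the prediction/filtering bookkeeping these coincide with the posterior error map $W_t$ of \eqref{eq:error}. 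Iterating from $t-\tau$ to $t$ yields
\[
\Delta_t = \Big(\textstyle\prod_{j=t-\tau}^{t-1}\mathcal{F}_j\Big)\,\Delta_{t-\tau}\,\Big(\textstyle\prod_{j=t-\tau}^{t-1}\hat{\mathcal{F}}_j\Big)^\top,
\]
so the problem reduces to bounding the two transition products together with the initial mismatch $\Delta_{t-\tau}$.

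Second, I would convert \eqref{eq:stochasticCtrb}--\eqref{eq:cntBeta6} into a uniform contraction estimate for these products. Conditions \eqref{eq:stochasticCtrb}--\eqref{eq:boundedInvertA} first furnish uniform bounds $cI \preceq \Pfnc_{t\vert t-1} \preceq CI$ on every admissible solution, with the lower bound coming from stochastic controllability and the upper bound from observability; the cone condition \eqref{eq-thm:condErrorCov} on $\hat P_{0,\tau}$ is precisely what places $\hat\Pfnc^{(\tau)}_{t\vert t-1}$ inside the same invariant set on which the map is contractive. Then, following the Lyapunov argument of \cite{Deyst1968} with the inverse covariance as certificate and invoking \eqref{eq:cntBeta6}, each step of the closed-loop map satisfies $\mathcal{F}_j^\top \Pfnc_{j+1\vert j}^{-1}\mathcal{F}_j \preceq \xi\,\Pfnc_{j\vert j-1}^{-1}$ with the explicit ratio $\xi = \beta_5(\alpha_2\beta_2+1)/[\beta_5(\alpha_2\beta_2+1)+\alpha_2\beta_3\beta_6]\in(0,1)$, the strict inequality being immediate since the denominator exceeds the numerator by $\alpha_2\beta_3\beta_6>0$. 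Telescoping over the $\tau$ steps and using the same estimate for the $\hat{\mathcal{F}}_j$ gives a bound $\|\Delta_t\|_2 \le \xi^{\tau}\,\kappa\,\|\Delta_{t-\tau}\|_2$, where the condition-number factor $\kappa = (\alpha_2\beta_2+1)/\alpha_2$ is exactly the prefactor carried by the $cI,CI$ bounds.

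Third, I would bound the initialization mismatch. By \eqref{eq-thm:initErrCov}, $\hat\Pfnc^{(\tau)}_{t-\tau\vert t-\tau-1} = \bar\Afnc_t\hat P_{0,\tau}\bar\Afnc_t^\top + \bar\Qv$, while $\Pfnc_{t-\tau\vert t-\tau-1}$ admits the same structure with the true (uniformly bounded) covariance in place of $\hat P_{0,\tau}$. Subtracting, using \eqref{eq-thm:condErrorCov} for $\hat P_{0,\tau}$, the norm bound $\|\bar\Afnc_t^\top\bar\Afnc_t\|_2\le\delta_1$ from \eqref{eq:boundedInvertA}, and the controllability/observability constants $\alpha_1,\beta_1$ that control the upper bound on $\Pfnc$, gives $\|\Delta_{t-\tau}\|_2 \le \delta_1\NX(\alpha_1\beta_1+1)^2/\beta_1^2$, where the factor $\NX$ enters when passing from the positive-semidefinite estimate to the spectral norm. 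Combining this with the contraction factor $\xi^\tau$ and the condition-number constant $\kappa=(\alpha_2\beta_2+1)/\alpha_2$, and taking the maximum over $t\in\mathbb{T}$, collapses exactly to the claimed bound \eqref{eq-thm:maxCovarianceBound}.

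The main obstacle will be step two: reconstructing the Deyst--Price contraction so that the decay ratio is exactly the stated $\xi$ rather than merely some unspecified $\xi'\in(0,1)$. This requires the right inverse-covariance Lyapunov function, careful aggregation of the two-step stochastic observability inequality (the block pairs in \eqref{eq:cntBeta6}) with the controllability lower bound, and precise tracking of how $\beta_3,\beta_5,\beta_6$ enter the single-step estimate; the algebra producing the specific rational form of $\xi$ and the prefactor in \eqref{eq-thm:maxCovarianceBound} is the delicate part. By contrast, once the uniform bounds $cI\preceq\Pfnc_{t\vert t-1}\preceq CI$ and the difference recursion are in hand, the remaining estimates in steps one and three are essentially bookkeeping.
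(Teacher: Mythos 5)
Your proposal is correct in substance and would establish the theorem, but it takes a genuinely different linear-algebraic route than the paper. The paper works with the posterior matrices $\Pfnc_{t\vert t}$ and first proves a two-sided operator sandwich (its Lem.~\ref{lem-app:inn}), $0 \preceq \Pfnc_{t\vert t}-\hat\Pfnc^{(\tau)}_{t\vert t} \preceq \prod_i \hat W_i\bigl(\Pfnc_{t-\tau\vert t-\tau}-\hat P_{0,\tau}\bigr)\prod_i\hat W_i^\top$, obtained not from an exact difference identity but from the optimality argument that a covariance propagated with a mismatched (suboptimal) gain dominates the optimal one; the lower bound $0$ (guaranteed by \eqref{eq-thm:condErrorCov} together with \eqref{eq-app:boundPostErrCov}) is then essential because the paper's norm conversion goes through the trace inequalities $Z\preceq\Tr(Z)I$ and $\Tr(Z)\le\NX\Vert Z\Vert_2$, which require $Z\succeq0$; this trace detour is where the factor $\NX$ and the bound \eqref{eq-app:boundTrace} enter, and the prior difference is recovered only at the very end by conjugating with $\bar\Afnc_t$ (factor $\delta_1$). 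You instead propagate the prior difference exactly, $\Delta_{t+1}=\mathcal{F}_t\Delta_t\hat{\mathcal{F}}_t^\top$ with the two closed-loop maps, and estimate the products by plain submultiplicativity of the spectral norm; both proofs then lean on exactly the same Deyst--Price machinery (the paper's Lem.~\ref{lem-app:asymptoticallyStability} and Lem.~\ref{lem-app:lemBoundDyn}) to produce the contraction rate $\xi$, and both carry the same unstated burden of justifying that this contraction also applies to the re-initialized filter's closed loop ($\hat{\mathcal F}_j$, resp.\ $\hat W_i$). Your route buys a cleaner argument (no optimality sandwich, no eigendecomposition/trace step) and, executed correctly, a strictly sharper bound; the paper's sandwich buys robustness, in that it never needs the exact difference identity, only operator monotonicity of the Riccati update.

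One bookkeeping caveat: your constants are misattributed between your steps two and three, although the errors happen to cancel in the product. Converting the weighted contraction $\mathcal{F}_j^\top\Pfnc_{j+1\vert j}^{-1}\mathcal{F}_j\preceq\xi\,\Pfnc_{j\vert j-1}^{-1}$ into a spectral-norm estimate necessarily uses both covariance bounds in \eqref{eq-app:boundPostErrCov}, so the correct per-product factor is the full condition number $\gamma_1^{-1}\gamma_2=(1+\alpha_1\beta_1)(\alpha_2\beta_2+1)/(\alpha_2\beta_1)$, not your $\kappa=(\alpha_2\beta_2+1)/\alpha_2$. Conversely, the initialization mismatch needs no factor $\NX$ and only one power of $(\alpha_1\beta_1+1)/\beta_1$: from $0\prec\hat P_{0,\tau}\prec\frac{\alpha_2}{\alpha_2\beta_2+1}I\preceq\Pfnc_{t-\tau\vert t-\tau}\preceq(\beta_1^{-1}+\alpha_1)I$ and \eqref{eq-thm:initErrCov} with \eqref{eq:boundedInvertA} one gets $\Vert\Delta_{t-\tau}\Vert_2\le\delta_1(\alpha_1\beta_1+1)/\beta_1$ directly; the $\NX$ in \eqref{eq-thm:maxCovarianceBound} is an artifact of the paper's trace detour, which your route does not need. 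The corrected product is $\xi^\tau\delta_1(\alpha_1\beta_1+1)^2(\alpha_2\beta_2+1)/(\alpha_2\beta_1^2)$, i.e.\ the right-hand side of \eqref{eq-thm:maxCovarianceBound} without the factor $\NX$, which is stronger and still implies the theorem; so this is a presentation flaw rather than a gap.
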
 
%\begin{proof}
%See Appendix~\ref{app-sec:ErrorCovar}. \hfill $\blacksquare$
%\end{proof} 
%%

The remainder of the section is used to proof Theorem~\ref{thm:boundErrorCov}. The following proof uses extensivly the result of~\cite{Deyst1968}, however, Deyst and Price make use of the \textit{posterior} covariance matrix $\Pfnc_{t\vert t}$ in stead of the \textit{prior} covariance matrix $\Pfnc_{t\vert t-1}$. Hence, we will first construct the proof w.r.t. posterior covariance matrix. The covariance matrix is given as
\begin{equation} \label{app-eq:postErrCov}
\Pfnc_{t\vert t}= \left(\eye-\Kfnc_t\Cfnc(p_t)\right) \Pfnc_{t\vert t-1},
\end{equation}
where it is proven in~\cite{Deyst1968} to be bounded as
\begin{equation} \label{eq-app:boundPostErrCov}
\left( \frac{\alpha_2}{\alpha_2\beta_2+1} \right) I\preceq\Pfnc_{t\vert t}\preceq \left(\frac{1}{\beta_1}+\alpha_1\right)I,
\end{equation}
and the prior covariance can be found from the posterior as
\begin{equation} \label{eq-app:postToPriorErrCov}
\Pfnc_{t+1\vert t} = \bar\Afnc_t\Pfnc_{t\vert t}\bar\Afnc_t^\top + \bar \Qv.
\end{equation}
Remark that $\hat P_{0,\tau}$ of~\eqref{eq-thm:initErrCov} substitutes the posterior covariance matrix on $\Pfnc_{t-\tau\vert t-\tau}$ to construct $\Pfnc^{\scriptstyle(\tau)}_{t\vert t}$.
\begin{lem} \label{lem-app:inn}
Given~\eqref{eq-thm:condErrorCov} and the error dynamics~\eqref{eq:error}, it holds that
\begin{multline} \label{eq-app:upperPostError}
0 \preceq \left(\Pfnc_{t\vert t}-\hat \Pfnc^{(\tau)}_{t\vert t}\right) \\ \! \preceq\!\! \prod_{i=t-\tau}^{t-1}\hat W_i\left(\Pfnc_{t-\tau\vert t-\tau}-\hat P_{0,\tau}\right)\prod_{i=t-\tau}^{t-1}\hat W_i^\top,
\end{multline}
where $\hat W_i$ is the filter error dynamics w.r.t. $\hat \Kfnc_{i+1}$ of $\hat \Pfnc^{\scriptstyle(\tau)}_{i\vert i}$. \hfill $\blacksquare$
\end{lem}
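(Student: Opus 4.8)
The plan is to compare the two posterior Riccati recursions --- the optimal one generating $\Pfnc_{t\vert t}$ and the perturbed one generating $\hat\Pfnc^{(\tau)}_{t\vert t}$ from the substitution $\hat\Pfnc^{(\tau)}_{t-\tau\vert t-\tau}=\hat P_{0,\tau}$ --- by deriving a matrix recursion for the difference $\Delta_s \coloneqq \Pfnc_{s\vert s}-\hat\Pfnc^{(\tau)}_{s\vert s}$ and sandwiching it between two congruence propagations. The crucial tool is the Joseph (symmetric) form of the measurement update: for a prior $\Pfnc^-$ and any gain $K$, set $\mathcal{J}(\Pfnc^-,K)\coloneqq(\eye-K\Cfnc(p_s))\Pfnc^-(\eye-K\Cfnc(p_s))^\top + K\Hfnc(p_s)\Rv\Hfnc(p_s)^\top K^\top$. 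Completing the square in $K$ gives $\mathcal{J}(\Pfnc^-,K)=\mathcal{J}(\Pfnc^-,K^\star)+(K-K^\star)\Omega(K-K^\star)^\top$, where $K^\star=\Pfnc^-\Cfnc(p_s)^\top\Omega^{-1}$ is the optimal gain and $\Omega=\Cfnc(p_s)\Pfnc^-\Cfnc(p_s)^\top+\Hfnc(p_s)\Rv\Hfnc(p_s)^\top\succ0$ by the non-singularity assumption. Hence $\mathcal{J}(\Pfnc^-,K)\succeq\mathcal{J}(\Pfnc^-,K^\star)=(\eye-K^\star\Cfnc(p_s))\Pfnc^-$, i.e. the optimal gain minimizes the posterior covariance and $\mathcal{J}(\Pfnc^-,K^\star)$ reproduces the standard update~\eqref{app-eq:postErrCov}.

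First I would establish the two one-step difference inequalities. Since $\Kfnc_s$ is optimal for $\Pfnc_{s\vert s-1}$, evaluating $\mathcal{J}$ at the \emph{perturbed} gain $\hat\Kfnc_s$ yields $\Pfnc_{s\vert s}\preceq\mathcal{J}(\Pfnc_{s\vert s-1},\hat\Kfnc_s)$; subtracting $\hat\Pfnc^{(\tau)}_{s\vert s}=\mathcal{J}(\hat\Pfnc^{(\tau)}_{s\vert s-1},\hat\Kfnc_s)$ makes the gain-dependent term $\hat\Kfnc_s\Hfnc(p_s)\Rv\Hfnc(p_s)^\top\hat\Kfnc_s^\top$ cancel, leaving $\Delta_s\preceq(\eye-\hat\Kfnc_s\Cfnc(p_s))(\Pfnc_{s\vert s-1}-\hat\Pfnc^{(\tau)}_{s\vert s-1})(\eye-\hat\Kfnc_s\Cfnc(p_s))^\top$. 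Propagating the prior difference through~\eqref{eq-app:postToPriorErrCov}, namely $\Pfnc_{s\vert s-1}-\hat\Pfnc^{(\tau)}_{s\vert s-1}=\bar\Afnc_{s-1}\Delta_{s-1}\bar\Afnc_{s-1}^\top$, and recognizing $\hat W_{s-1}=(\eye-\hat\Kfnc_s\Cfnc(p_s))\bar\Afnc_{s-1}$ from~\eqref{eq:error}, gives the upper recursion $\Delta_s\preceq\hat W_{s-1}\Delta_{s-1}\hat W_{s-1}^\top$. Symmetrically, optimality of $\hat\Kfnc_s$ for $\hat\Pfnc^{(\tau)}_{s\vert s-1}$ together with the standard-form identity $\Pfnc_{s\vert s}=\mathcal{J}(\Pfnc_{s\vert s-1},\Kfnc_s)$ produces the lower recursion $\Delta_s\succeq W_{s-1}\Delta_{s-1}W_{s-1}^\top$.

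Then I would close the argument by induction over the window $[t-\tau,t]$. For the base case, the initialization condition~\eqref{eq-thm:condErrorCov} gives $\hat P_{0,\tau}\prec\big(\tfrac{\alpha_2}{\alpha_2\beta_2+1}\big)I$, while the Deyst--Price lower bound~\eqref{eq-app:boundPostErrCov} gives $\big(\tfrac{\alpha_2}{\alpha_2\beta_2+1}\big)I\preceq\Pfnc_{t-\tau\vert t-\tau}$, so $\Delta_{t-\tau}=\Pfnc_{t-\tau\vert t-\tau}-\hat P_{0,\tau}\succ0$. Feeding $\Delta_{t-\tau}\succeq0$ into the lower recursion and using that congruence preserves positive semidefiniteness propagates $\Delta_s\succeq0$ throughout the window, which is the left inequality of~\eqref{eq-app:upperPostError}. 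Iterating the upper recursion and invoking congruence monotonicity (if $A\preceq B$ then $MAM^\top\preceq MBM^\top$) then gives $\Delta_t\preceq\big(\prod_{i=t-\tau}^{t-1}\hat W_i\big)\big(\Pfnc_{t-\tau\vert t-\tau}-\hat P_{0,\tau}\big)\big(\prod_{i=t-\tau}^{t-1}\hat W_i\big)^\top$, the right inequality. I expect the only real obstacle to be bookkeeping rather than any deep difficulty: the cancellation of the noise term hinges on both Joseph forms carrying the \emph{identical} gain $\hat\Kfnc_s$, the opposite signs of the two one-step inequalities hinge on using the gain that is optimal for $\Pfnc$ in one and optimal for $\hat\Pfnc^{(\tau)}$ in the other, and the ordering inside $\prod_{i=t-\tau}^{t-1}\hat W_i$ must be the left-to-right composition dictated by the nested congruence so that $\hat W_{t-1}$ appears outermost.
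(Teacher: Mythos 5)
Your proposal is correct and follows essentially the same route as the paper's proof: both bound the difference by evaluating the two posterior updates at a common frozen gain ($\hat\Kfnc$ for the upper bound, $\Kfnc$ for the lower bound) so that the Joseph-form noise terms cancel and the difference propagates as an exact congruence through $\hat W_i$ (resp. $W_i$), which is then iterated over the window. The only cosmetic difference is that you prove the gain-optimality fact by completing the square, where the paper cites it as~\cite[Theorem 2.1]{Anderson1979}, and you make explicit (via the Deyst--Price bound~\eqref{eq-app:boundPostErrCov}) why $\Pfnc_{t-\tau\vert t-\tau}-\hat P_{0,\tau}\succeq 0$, which the paper attributes to ``construction of $\hat P_{0,\tau}$.''
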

\begin{proof}
\begin{subequations}
Lets first proof the lemma for $\tau=1$. As $\Pfnc_{t\vert t}$ is the optimal solution for $\Kfnc_t$, then $\Pfnc_{t\vert t}'$ is constructed from $\Kfnc_i$ for $i=t_0,\ldots,t-\tau-1$ and $\Kfnc^{\scriptstyle(\tau)}_{j}$ for $j=t-\tau,\ldots,t$ with $\Kfnc_j\neq \Kfnc^{\scriptstyle(\tau)}_j$. Remark that $\Pfnc_{t\vert t}'$ is suboptimal; hence, $\Pfnc_{t\vert t}'\succeq \Pfnc_{t\vert t}$~\cite[Theorem 2.1]{Anderson1979}. Therefore,
\begin{equation} \label{eq-app:inequalityCov}
 \Pfnc_{t\vert t}'-\hat \Pfnc^{(1)}_{t\vert t}  \succeq \Pfnc_{t\vert t}-\hat \Pfnc^{(1)}_{t\vert t}.
\end{equation}
Also see that~\eqref{app-eq:postErrCov} can be written, by using~\eqref{eq-lem:innErrorCov}, as
\begin{align}
\Pfnc_{t\vert t} \! =& \! \left(I\!-\!\Kfnc_t\Cfnc(p_t)\!\right)\!\!\left( \bar\Afnc_{t-1} \Pfnc_{t-1\vert t-1}  \bar\Afnc^\top_{t-1}\!+\!\Gfnc(p_{t-1}) \bar\Qv\Gfnc^\top\!\!(p_{t-1})\! \right)\nonumber\\ 
				 & \quad\left(I-\Kfnc_t\Cfnc(p_t)\right)^\top  + \Kfnc_{t-1}\Rv\Kfnc_{t-1}^\top . \label{eq-app:prioriCovariance}
\end{align}
Combining~\eqref{eq-app:inequalityCov} and~\eqref{eq-app:prioriCovariance} results in
\begin{multline} \label{eq-app:upper1stepBound}
\hat W_{t-1}\! \left(\Pfnc_{t-1\vert t-1}-\hat P_{0,\tau} \right)\! \hat W^\top_{t-1} \succeq \Pfnc_{t\vert t}-\hat \Pfnc^{(1)}_{t\vert t}\!.
\end{multline}
Then repeating the upper bound~\eqref{eq-app:upper1stepBound} for $\tau$ time steps proofs the upper bound, i.e., right-hand side of~\eqref{eq-app:upperPostError}. \\
Similar argument can be made for the lower bound. Now, initialize with $\hat P_{0,\tau}$ and use $\Kfnc_j$ for $j=t-\tau,\ldots,t$ to construct $\hat \Pfnc_{t\vert t}'$, i.e., find an suboptimal solution with $\hat \Pfnc_{t\vert t}'\succeq\hat \Pfnc^{\scriptstyle(\tau)}_{t\vert t}$. Hence $\Pfnc_{t\vert t}-\hat \Pfnc^{\scriptstyle(\tau)}_{t\vert t}  \succeq \Pfnc_{t\vert t}-\hat \Pfnc_{t\vert t}'$, which results, for $\tau=1$, in the following lower bound
\begin{multline} \label{eq-app:lower1stepBound}
W_{t-1}\!\left(\Pfnc_{t-1\vert t-1}-\hat P_{0,1} \right)\! W^\top_{t-1} \preceq \Pfnc_{t\vert t}-\hat \Pfnc^{(1)}_{t\vert t}.
\end{multline}
\end{subequations}
As $\Pfnc_{t-1\vert t-1}-\hat P_{0,\tau}$ is semi-positive definite (by construction of $\hat P_{0,\tau}$), the left-hand side of~\eqref{eq-app:lower1stepBound} is bounded by zero. 
\end{proof}
Next, let us provide the sufficient conditions for quadratic Lyapunov stability of the filter dynamics~\eqref{eq:error} proven in~\cite{Deyst1968}:
\begin{lem} \label{lem-app:asymptoticallyStability}
\begin{subequations}
If the LPV-SS system~\eqref{eq:SSrep} satisfies conditions~\eqref{eq:stochasticCtrb}-\eqref{eq:boundedInvertA} then the system~\eqref{eq:error} is asymptotically stable. Additionally, there exists a real scalar function $V(e_t,t)$ such that
\begin{align}
0<\gamma_1\Vert e_t\Vert^2_2\leq V(e_t,t)&\leq\gamma_2\Vert e_t\Vert^2_2, e_t\neq0, \label{eq-app:boundedLya} \\
V(e_t,t) - V(e_{t-1},t-1) &\leq \gamma_3\Vert e_t\Vert^2_2 <0, e_t\neq0, \label{eq-app:boundedDif}
\end{align}
where
\begin{equation} \label{eq-app:constGam}
\gamma_1\!=\!\frac{\beta_1}{1\!+\!\alpha_1\beta_1},\hspace{3mm}\gamma_2 \!=\! \frac{1}{\alpha_2}\!+\beta_2,\hspace{3mm}\gamma_3 \!=\! -\beta_3^2\beta_5^{-1}\beta_6.
\end{equation}
\end{subequations}
\hfill $\blacksquare$
\end{lem}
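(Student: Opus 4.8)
The plan is to use the inverse a posteriori error covariance as a quadratic Lyapunov function, which is the classical choice of Deyst and Price~\cite{Deyst1968}, and to verify that the uniform bounds~\eqref{eq:stochasticCtrb}--\eqref{eq:cntBeta6} reproduce their hypotheses along any fixed scheduling trajectory. Concretely, I would take $V(e_t,t) = e_t^\top \Pfnc_{t\vert t}^{-1} e_t$; this is well defined because $\Pfnc_{t\vert t}\succ 0$ is guaranteed by the lower bound in~\eqref{eq-app:boundPostErrCov}. The observation that drives the whole argument is that, for a frozen realization of $p$, the transformed representation~\eqref{eq:SSrepTrans} is an ordinary linear time-varying system whose filter error~\eqref{eq:error} obeys the recursions~\eqref{eq-lem:kalmanFilter}--\eqref{eq-lem:noiseCov}, and that conditions~\eqref{eq:stochasticCtrb}--\eqref{eq:boundedInvertA}, being required to hold uniformly for all $i\in\mathbb{T}$, are precisely the uniform stochastic controllability, stochastic observability, and bounded-invertibility properties under which~\cite{Deyst1968} operates.

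The bounds~\eqref{eq-app:boundedLya} I would obtain directly from~\eqref{eq-app:boundPostErrCov}. Inverting the covariance sandwich gives $\left(\tfrac{1}{\beta_1}+\alpha_1\right)^{-1}\eye \preceq \Pfnc_{t\vert t}^{-1} \preceq \left(\tfrac{\alpha_2}{\alpha_2\beta_2+1}\right)^{-1}\eye$, and a Rayleigh-quotient estimate of $V(e_t,t)=e_t^\top\Pfnc_{t\vert t}^{-1}e_t$ then yields exactly $\gamma_1 = \tfrac{\beta_1}{1+\alpha_1\beta_1}$ as the lower and $\gamma_2 = \tfrac{1}{\alpha_2}+\beta_2$ as the upper coefficient. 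This step is routine; it merely transcribes the two-sided covariance bound into a two-sided Lyapunov bound.

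The decrease~\eqref{eq-app:boundedDif} is the heart of the proof and the main obstacle. Here I would exploit the standard Kalman identity $\eye-\Kfnc_{t}\Cfnc(p_{t}) = \Pfnc_{t\vert t}\Pfnc_{t\vert t-1}^{-1}$ read off from~\eqref{app-eq:postErrCov}, together with the information-form update $\Pfnc_{t\vert t}^{-1} = \Pfnc_{t\vert t-1}^{-1} + \Cfnc(p_t)^\top(\Hfnc(p_t)\Rv\Hfnc(p_t)^\top)^{-1}\Cfnc(p_t)$ and the prediction step~\eqref{eq-app:postToPriorErrCov}. Substituting $e_{t} = \Pfnc_{t\vert t}\Pfnc_{t\vert t-1}^{-1}\bar\Afnc_{t-1}e_{t-1}$ into $V(e_t,t)$ and using $\Pfnc_{t\vert t-1} \succeq \bar\Afnc_{t-1}\Pfnc_{t-1\vert t-1}\bar\Afnc_{t-1}^\top$ shows that the one-step increment is bounded above by minus the information gained from the current measurement, which already gives a non-strict decrease. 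To convert this into the strict, uniform rate $\gamma_3$, I would propagate the estimate over the stochastic-observability window encoded by~\eqref{eq:cntBeta6}: the factor $\beta_3^2$ arises from the squared lower bound on the stacked two-step output map, $\beta_5^{-1}$ from the lower bound on the inverse of the stacked innovation-covariance block, and $\beta_6$ from the lower bound on the backward transition $W_{i-2}^{-1}W_{i-1}^{-1}$ that re-expresses the decrease in terms of the current error $e_t$ rather than the error at the start of the window. Collecting these gives $V(e_t,t)-V(e_{t-1},t-1)\le -\beta_3^2\beta_5^{-1}\beta_6\,\Vert e_t\Vert_2^2 = \gamma_3\Vert e_t\Vert_2^2 < 0$, matching~\cite{Deyst1968}.

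Finally, asymptotic (indeed exponential) stability of~\eqref{eq:error} follows from the standard Lyapunov theorem for discrete time-varying systems: the two-sided bound~\eqref{eq-app:boundedLya} with $\gamma_1,\gamma_2>0$ together with the strict decrease~\eqref{eq-app:boundedDif} forces $V(e_t,t)\to 0$ geometrically, hence $e_t\to 0$. I expect the genuine difficulty to lie entirely in the quantitative decrease step, namely keeping the constants uniform over the whole scheduling trajectory (which is exactly what the uniform-in-$p$ assumptions~\eqref{eq:stochasticCtrb}--\eqref{eq:boundedInvertA} secure) and carrying the window bookkeeping through to the precise value of $\gamma_3$.
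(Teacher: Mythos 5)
Your proposal is correct and coincides with the paper's approach: the paper offers no proof of this lemma at all, importing it verbatim from~\cite{Deyst1968}, and your reconstruction --- the inverse a posteriori covariance $e_t^\top\Pfnc_{t\vert t}^{-1}e_t$ as Lyapunov function, with the two-sided bound~\eqref{eq-app:boundPostErrCov} inverted via a Rayleigh-quotient estimate to produce $\gamma_1=\beta_1/(1+\alpha_1\beta_1)$ and $\gamma_2=1/\alpha_2+\beta_2$, and the windowed constants $\beta_3,\beta_5,\beta_6$ from~\eqref{eq:stochasticCtrb}--\eqref{eq:cntBeta6} assembled into the strict decrement $\gamma_3$ --- is precisely the Deyst--Price argument applied along a frozen scheduling trajectory, where the LPV filter error is an ordinary time-varying system. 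Nothing in your sketch deviates from or falls short of what the paper relies on.
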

%
%The bounds~\eqref{eq-thm:stochasticCtrb}-\eqref{eq-thm:boundedInvertA} are sufficient conditions for the quadratic Lyapunov stability of Lemma~\ref{app-prop:asymptoticallyStability}. As with these bounds one can proof boundedness on the estimation error covariance matrices $P_{t\vert t}$ and $P_{t\vert t-1}$ of the minimum variance estimator for the TV-SS system~\eqref{eq:TVrepresentation} and, therefore, quadratic Lyapunov stability on the error dynamics. 
Using Lem.~\ref{lem-app:asymptoticallyStability}, the bound on the error-dynamics is:
\begin{lem} \label{lem-app:lemBoundDyn}
If the LPV-SS system~\eqref{eq:SSrep} satisfies conditions~\eqref{eq:stochasticCtrb}-\eqref{eq:boundedInvertA} then the $\tau$-step homogeneous error dynamics~\eqref{eq:error} are bounded as
\begin{equation} \label{eq-app:lemBoundDyn}
	\Vert e_t \Vert^2_2 = \left\Vert \prod_{i=t-\tau}^{t-1}W_i e_{t-\tau} \right\Vert^2_2 \leq \xi^\tau\gamma_1^{-1}\gamma_2 \Vert e_{t-\tau}\Vert^2_2,
\end{equation}
where $\xi = \frac{\gamma_2}{\gamma_2-\gamma_3}$. \hfill $\blacksquare$
\end{lem}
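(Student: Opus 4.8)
The plan is to exploit the quadratic Lyapunov function $V(e_t,t)$ guaranteed by Lemma~\ref{lem-app:asymptoticallyStability} and to convert its \emph{additive} per-step decrease into a \emph{multiplicative} (geometric) contraction. Since the homogeneous dynamics~\eqref{eq:error} give $e_t = \prod_{i=t-\tau}^{t-1} W_i\, e_{t-\tau}$, a geometric decay of $V$ along the trajectory will immediately translate, via the two-sided sandwich $\gamma_1\Vert e_t\Vert_2^2 \leq V(e_t,t) \leq \gamma_2\Vert e_t\Vert_2^2$ of~\eqref{eq-app:boundedLya}, into the stated bound on $\Vert e_t\Vert_2^2$.

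First I would derive the one-step contraction. Starting from the decrease inequality~\eqref{eq-app:boundedDif}, $V(e_t,t)-V(e_{t-1},t-1)\leq \gamma_3\Vert e_t\Vert_2^2$, I would eliminate $\Vert e_t\Vert_2^2$ in favour of $V(e_t,t)$. The subtle point here, and the step requiring the most care, is that $\gamma_3<0$, so the upper sandwich bound $V(e_t,t)\leq\gamma_2\Vert e_t\Vert_2^2$, rewritten as $\Vert e_t\Vert_2^2\geq\gamma_2^{-1}V(e_t,t)$, \emph{reverses} direction when multiplied by $\gamma_3$, yielding $\gamma_3\Vert e_t\Vert_2^2\leq\gamma_3\gamma_2^{-1}V(e_t,t)$. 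Substituting and collecting the $V(e_t,t)$ terms gives $V(e_t,t)(1-\gamma_3\gamma_2^{-1})\leq V(e_{t-1},t-1)$, i.e.
\begin{equation*}
V(e_t,t)\leq \frac{\gamma_2}{\gamma_2-\gamma_3}\,V(e_{t-1},t-1) = \xi\,V(e_{t-1},t-1).
\end{equation*}
Because $\gamma_2>0$ and $-\gamma_3>0$ one has $\gamma_2-\gamma_3>\gamma_2>0$, so $\xi\in(0,1)$ as required.

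Then I would iterate this one-step contraction $\tau$ times to obtain $V(e_t,t)\leq\xi^\tau V(e_{t-\tau},t-\tau)$, and close the argument by applying the lower bound $\gamma_1\Vert e_t\Vert_2^2\leq V(e_t,t)$ on the left and the upper bound $V(e_{t-\tau},t-\tau)\leq\gamma_2\Vert e_{t-\tau}\Vert_2^2$ on the right, which produces exactly $\Vert e_t\Vert_2^2\leq\xi^\tau\gamma_1^{-1}\gamma_2\Vert e_{t-\tau}\Vert_2^2$. The only remaining obstacle is the degenerate case $e_t=0$, for which~\eqref{eq-app:boundedDif} is not asserted; this is harmless because $W_t\cdot 0=0$ in~\eqref{eq:error} forces the whole homogeneous trajectory to vanish once any iterate is zero, so the bound holds trivially, and conversely $e_t\neq0$ guarantees every intermediate iterate $e_i$ ($i=t-\tau,\dots,t$) is nonzero and the chained contraction is valid at each step.
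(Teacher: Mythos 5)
Your proof is correct and follows essentially the same route as the paper's: substitute the upper sandwich bound $V(e_t,t)\leq\gamma_2\Vert e_t\Vert_2^2$ into the decrease inequality~\eqref{eq-app:boundedDif} to get the one-step contraction $V(e_t,t)\leq\xi V(e_{t-1},t-1)$ with $\xi=\frac{1}{1-\gamma_3\gamma_2^{-1}}=\frac{\gamma_2}{\gamma_2-\gamma_3}$, iterate $\tau$ times, and convert back to norms via~\eqref{eq-app:boundedLya}. Your explicit treatment of the sign reversal caused by $\gamma_3<0$ and of the degenerate case $e_t=0$ (which the paper's proof leaves implicit) are correct refinements of the same argument.
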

\begin{proof}
\begin{subequations}
To simplify notation, define $V_t\coloneqq V(e_t,t)$. Substituting the upper bound of~\eqref{eq-app:boundedLya} into~\eqref{eq-app:boundedDif} gives
\begin{equation}
V_t - V_{t-1} \leq \gamma_3 \Vert e_t \Vert^2_2 \leq \gamma_3 \gamma_2^{-1} V_t < 0.
\end{equation}
Therefore the following holds
\begin{equation} \label{eq-app:boundDifLya}
V_t - \xi V_{t-1} \leq 0, \hspace{0.5cm} \xi = \frac{1}{1-\gamma_3\gamma_2^{-1}},
\end{equation}
where $\xi\in[0,1)$ as $\gamma_3\gamma_2^{-1}<0$. Hence, $V_t-\xi^\tau V_{t-\tau}\leq0$. Combining this $\tau$-step Lyapunov bound and~\eqref{eq-app:boundedLya} gives
\begin{equation}\label{eq-app:NstepBound}
\gamma_1 \Vert e_t\Vert^2_2 \leq V_t \leq \xi^\tau\gamma_2\Vert e_{t-\tau} \Vert^2_2
\end{equation}
\end{subequations}
The proof is completed by applying the $\ell_2$ norm on~\eqref{eq:error} and substituting~\eqref{eq-app:NstepBound}.
\end{proof}
To complete the proof of Thm.~\ref{thm:boundErrorCov}. First, take the eigenvalue decomposition $Z=UDU^\top$, where $U$ is a matrix containing the eigenvectors and $D$ the diagonal matrix containing the eigenvalues $\lambda_i\geq0$ of $Z$. For a positive definite matrix $Z$, it holds that  %As the eigenvalues are non-negative, it holds that $\Tr\left(D\right)I\succeq D$. Hence, it holds that 
\begin{equation} \label{eq-app:BoundByTrace}
Z = UDU^\top \preceq \Tr\left( D \right) UU^\top = \Tr\left( Z \right) I.
\end{equation}
%
%
%So, take the eigenvalue decomposition of $P_{t\vert t}-\hat P_{t\vert t} = \sum_{j=1}^\NX \lambda^2_j u_ju_j^\top$. Then the trace of the right-hand side of~\eqref{eq-app:upperPostError} is given by
%\begin{equation*}
%\Tr\left[ \prod_{i=t}^{t+N}\hat W_i\left(\sum_{j=1}^\NX \lambda^2_j u_ju_j^\top\right) \prod_{i=t}^{t+N}\hat W_i^\top \right].
%\end{equation*}
%
Second, take the eigenvalue decomposition of $\Pfnc_{t\vert t}-\hat \Pfnc_{t\vert t} = \sum_{j=1}^\NX \lambda^2_{j,t} u_{j,t}u_{j,t}^\top$ and define $y_{j,t}\coloneqq\lambda_{j,t} u_{j,t}$. Then, combining the trace of the left-hand side of~\eqref{eq-app:upperPostError} with~\eqref{eq-app:lemBoundDyn} gives
\begin{multline}
\!\!\sum_{j=1}^\NX \Tr\left[ y_{j,t}y^\top_{j,t} \right] = \sum_{j=1}^\NX \Vert y_{j,t}\Vert^2_2 \!\leq \!\sum_{j=1}^\NX \xi^\tau \gamma_1^{-1}\gamma_2 \Vert y_{j,t-\tau}\Vert^2_2 \\
\!=\xi^\tau \gamma_1^{-1}\gamma_2\Tr\!\left[ \Pfnc_{t-\tau\vert t-\tau}\!-P_{0,\tau}\right]\!\!. \label{eq-app:BoundtraceUpperBound}
\end{multline}
Joining~\eqref{eq-app:BoundByTrace} and~\eqref{eq-app:BoundtraceUpperBound} results in
\begin{equation*}
\Pfnc_{t\vert t}-\hat \Pfnc^{(\tau)}_{t\vert t} \preceq\xi^\tau \gamma_1^{-1}\gamma_2\Tr\left[ \Pfnc_{t-\tau\vert t-\tau}-P_{0,\tau} \right] I.
\end{equation*}
Left and right multiplying by $\bar\Afnc_t$ and $\bar\Afnc_t^\top$, respectively, and substituting~\eqref{eq-app:postToPriorErrCov} gives
\begin{multline}
\bar\Afnc_t\left(\Pfnc_{t\vert t}-\hat \Pfnc^{(\tau)}_{t\vert t}\right)\bar\Afnc_t^\top = \Pfnc_{t+1\vert t}-\hat \Pfnc^{(\tau)}_{t+1\vert t} \\  
\preceq\xi^\tau \gamma_1^{-1}\gamma_2\Tr\left[ \Pfnc_{t-\tau\vert t-\tau}-\hat P_{0,\tau} \right] \bar\Afnc_t\bar\Afnc_t^\top \\ 
\preceq\xi^\tau \gamma_1^{-1}\gamma_2\delta_1\Tr\left[ \Pfnc_{t-\tau\vert t-\tau}-\hat P_{0,\tau} \right] I. \label{eq-app:BoundtraceUpperBoundProiriToPostiroiri}
\end{multline}
%The upper bound of~\eqref{eq-app:BoundtraceUpperBoundProiriToPostiroiri} contains the a posteriori error covariance matrix. So, third, as $\bar\Afnc_i^\top \bar\Afnc_i$ is bounded~\eqref{eq-thm:boundedInvertA}, see that
%\begin{multline}
%\Tr\left[ \Pfnc_{i+1\vert i}-\hat \Pfnc_{i+1 \vert i} \right] = \Tr\left[ \bar\Afnc_i\left( \Pfnc_{i\vert i}-\hat \Pfnc_{i\vert i}\right) \bar\Afnc_i^\top \right] \\
%\succeq \sum_j^{\NX} \lambda_j(\bar\Afnc_i\bar\Afnc_i^\top)\lambda_{\NX-j+1}(\Pfnc_{i\vert i}-\hat \Pfnc_{i\vert i}) \\ 
%\succeq \delta_2  \Tr\left[ \Pfnc_{i\vert i}-\hat \Pfnc_{i\vert i} \right]. \label{eq-app:inequlityOnCovarianceProiriToPostiroiri}
%\end{multline}
%%
%The second line of~\eqref{eq-app:inequlityOnCovarianceProiriToPostiroiri} is given in~\cite[Theorem H.1.i]{Marshall2011}. Substituting~\eqref{eq-app:inequlityOnCovarianceProiriToPostiroiri} in~\eqref{eq-app:BoundtraceUpperBoundProiriToPostiroiri} will conclude the proof.
Taking into account~\eqref{eq-app:boundPostErrCov} and using~\eqref{eq-thm:condErrorCov}, the following holds
\begin{equation} \label{eq-app:boundTrace}
\Tr\left[ \Pfnc_{t-\tau\vert t-\tau}-\hat P_{0,\tau} \right] < \NX (\beta_1^{-1}+\alpha_1).
\end{equation}
To conclude the proof, the spectral norm of a matrix $A$ is $\Vert A \Vert_2=\sigma_{\mathrm{max}}(A)$. Hence, applying the spectral norm on~\eqref{eq-app:BoundtraceUpperBoundProiriToPostiroiri} and substitute~\eqref{eq-app:boundTrace} results in
\begin{equation}
\left\Vert\Pfnc_{t\vert t-1}-\hat \Pfnc^{(\tau)}_{t\vert t-1}\right\Vert_2  \!\leq\! \xi^{\tau} \gamma_1^{-1}\gamma_2\delta_1\NX (\beta_1^{-1}\!+\!\alpha_1),
\end{equation}
which is equivalent to~\eqref{eq-thm:maxCovarianceBound} when substituting~\eqref{eq-app:constGam} and taking into account that the bound is time independent, i.e., it should hold for every $t\in\mathbb{T}$, which concludes the proof of Thm.~\ref{thm:boundErrorCov}.
%
%where $\Vert A \Vert_2$ is the spectral norm of matrix $A$ or
%\begin{equation}
%\Vert\Pfnc_{t\vert t-1}-\hat \Pfnc_{t\vert t-1}\Vert_F  < \xi^N \gamma_1^{-1}\gamma_2\delta_1\delta_2^{-1}\NX^{3/2} (\beta_1^{-1}+\alpha_1)
%\end{equation}
%
%\hfill $\blacksquare$

% ############################################### %
% ----------------------------------------------- %
% 		Approximation of Kalman gain
% ----------------------------------------------- %
% ############################################### %

\section{Approximation of Kalman gain} \label{sec:apprxKalmanGain}

The innovation form is a different view on constructing a the Kalman filter for~\eqref{eq:SSrep}. Hence, $\mathcal{K}_t$ in~\eqref{eq-lem:kalmanFilter} can be viewed as the optimal LPV Kalman gain of~\eqref{eq:SSrep}. In the LTI case, the Kalman filter is asymptotically time invariant, therefore, a suboptimal filter can be found with a constant $P$ and $K$ matrix~\cite{Anderson1979}. Hence, in the LTI case, the innovation form with constant $K$ and $P$ matrix is viewed as a model description which allows a general noise model. However, for the LPV case, Lem.~\ref{lem:inn} indicates that even if $\Afnc(\cdot),\ldots,\Dfnc(\cdot)$ have, for example, affine dependence on $p_t$ (each $\psi^\ind{i}(p_t)=p^\ind{i}_t$) then $K_t,P_{t\vert t-1},\Omega_t$ are meromorphic functions, where the nominator and denominator are polynomial functions in the scheduling signal $p$ and its past time-shifts. Hence, the filter, generally speaking, it is not clear that $\Kfnc_t$ will converge to a steady state solution with some constant $K$ matrix, and, therefore, $\Kfnc_t$ is a function of scheduling signal and its past, i.e., $p_i$ with $i\in\mathbb{T}$.

However, a popular model for many subspace identification schemes is the innovation form, e.g., see~\cite{Verhaegen2007}. In the LTI case, the connection between the innovation form and the LTI counterpart of~\eqref{eq:SSrep}, e.g., $\Afnc(p)=A$, is well studied. However, it has not been thoroughly investigated in the LPV case. As Lem.~\ref{lem:inn} shows, the LPV-SS representation with general noise~\eqref{eq:SSrep} is not equivalent to the innovation form with only static, affine matrix functions, commonly used~\cite{Verdult02,Wingerden2009a}. Hence, in this section, we are providing two approximations:
\begin{enumerate*}[label=\roman*)]
	\item due to the asymptotic convergence of the innovation filter (Thm.~\ref{thm:boundErrorCov}), the Kalman gain $\Kfnc_t$ can be approximated by $\Kfnc^{(\tau)}_t$, which depends only on $p_{t-\tau},\ldots,p_t$; and
	\item in some cases, by sacrificing state minimally, the approximate Kalman gain with dynamic, rational dependence on the scheduling signal can be transformed to an approximate Kalman gain with static, affine dependence (Sec.~\ref{subsec:staticAffKalman}).
\end{enumerate*}

To start with the first approximation, thm.~\ref{thm:boundErrorCov} highlights that the covariance matrix $\Pfnc_{t\vert t-1}$ can be arbitrary well approximated by only taking the scheduling signal $p$ from $p_{t-\tau},\ldots,p_t$ into account, e.g., ``fading memory'' of the innovation recursions. The approximation error is upper bounded, as given in~\eqref{eq-thm:maxCovarianceBound}, and decays to zero if $\tau\rightarrow\infty$. Furthermore, the covariance matrix~\eqref{eq-lem:innErrorCov} is not implicitly dependent on the Kalman gain~\eqref{eq-lem:kalmanFilter}; however, any approximation of $\Pfnc$ will lead to an approximation of $\Kfnc$. As $\Kfnc$ is a rational function, any approximation of $\Pfnc$ will result in a unique relation in $\Kfnc$ (up to co-primness of the nominator and denominator).

\begin{conj} \label{conj:truncKalman}
\begin{subequations}
Let us consider that there exists a $\tau>0$, $\hat \Pfnc^{\scriptstyle(\tau)}_{t\vert t-1}$ as constructed in Thm.~\ref{thm:boundErrorCov}, and let the associated gain $\Kfnc_t^{\scriptstyle(\tau)}$ be given by~\eqref{eq-lem:kalmanFilter} using $\hat \Pfnc^{\scriptstyle(\tau)}_{t\vert t-1}$. Then the Kalman gain can be decomposed as %\vspace{-3mm}
\begin{equation}
\Kfnc_t = \Kfnc_t^{(\tau)} + \Rfnc_t^{(\tau)},
\end{equation} %\vskip -7mm \noindent
where $\Rfnc_t^{\scriptstyle(\tau)}$ is a rational matrix function in $p_t,p_{t-1},\ldots$. In addition, if $\tau\rightarrow\infty$ then $\Rfnc_t^{\scriptstyle(\tau)}\rightarrow0$ and \vspace{-3mm}
\begin{equation}
\big\Vert \Rfnc_t^{(\tau)} \big\Vert_2 > \big\Vert \Rfnc_t^{(\tau+1)}\big\Vert_2,
\end{equation} %\vskip -7mm \noindent
where $\Rfnc_t^{\scriptstyle(\tau+1)}$ is the remainder term w.r.t. $\Kfnc_t^{\scriptstyle(\tau+1)}$ and $\Kfnc_t^{\scriptstyle(\tau+1)}$ is constructed by using $\hat \Pfnc^{\scriptstyle(\tau+1)}_{t\vert t-1}\!$. \hfill $\square$
\end{subequations}
\end{conj}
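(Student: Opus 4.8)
The plan is to define the remainder directly as $\Rfnc_t^{(\tau)} \coloneqq \Kfnc_t - \Kfnc_t^{(\tau)}$ and reduce all three assertions to the covariance bound of Thm.~\ref{thm:boundErrorCov}. First I would obtain a closed form for the remainder. Writing $\Delta\Pfnc_t \coloneqq \Pfnc_{t\vert t-1} - \hat\Pfnc^{(\tau)}_{t\vert t-1}$ and letting $\hat\Omega_t$ denote the innovation covariance~\eqref{eq-lem:noiseCov} evaluated at $\hat\Pfnc^{(\tau)}_{t\vert t-1}$, the numerators of $\Kfnc_t$ and $\Kfnc_t^{(\tau)}$ in~\eqref{eq-lem:kalmanFilter} differ only by $\Afnc(p_t)\Delta\Pfnc_t\Cfnc^\top(p_t)$, while $\Omega_t - \hat\Omega_t = \Cfnc(p_t)\Delta\Pfnc_t\Cfnc^\top(p_t)$. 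Splitting $\Kfnc_t-\Kfnc_t^{(\tau)}$ as the sum of a numerator difference and an inverse difference, and applying the resolvent identity $\Omega_t^{-1} - \hat\Omega_t^{-1} = -\hat\Omega_t^{-1}(\Omega_t - \hat\Omega_t)\Omega_t^{-1}$, the two contributions collapse to the single product
\begin{equation*}
\Rfnc_t^{(\tau)} = \bigl(\Afnc(p_t) - \Kfnc_t^{(\tau)}\Cfnc(p_t)\bigr)\,\Delta\Pfnc_t\,\Cfnc^\top(p_t)\,\Omega_t^{-1}.
\end{equation*}
This identity is the workhorse for every part of the statement.

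Second, rationality is immediate from this form. The matrices $\Afnc,\Cfnc,\Hfnc$ are affine in $p_t$; by Lem.~\ref{lem:inn} the recursion~\eqref{eq-lem:innErrorCov} makes both $\Pfnc_{t\vert t-1}$ and $\hat\Pfnc^{(\tau)}_{t\vert t-1}$, hence $\Delta\Pfnc_t$, rational in $p_t,p_{t-1},\ldots$; $\Omega_t^{-1}$ is the inverse of an affine-in-$\Pfnc$ expression and is therefore a rational matrix function; and $\Kfnc_t^{(\tau)}$ is rational. A finite product of rational matrix functions is rational, so $\Rfnc_t^{(\tau)} \in \mathscr{R}^{\NX\times\NY}$, coprimeness of numerator and denominator being understood up to cancellation, exactly as flagged in the text preceding the statement.

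Third, convergence to zero. Taking spectral norms and using submultiplicativity,
\begin{equation*}
\bigl\Vert \Rfnc_t^{(\tau)} \bigr\Vert_2 \leq \bigl\Vert \Afnc(p_t) - \Kfnc_t^{(\tau)}\Cfnc(p_t) \bigr\Vert_2 \, \bigl\Vert \Cfnc(p_t) \bigr\Vert_2 \, \bigl\Vert \Omega_t^{-1} \bigr\Vert_2 \, \bigl\Vert \Delta\Pfnc_t \bigr\Vert_2.
\end{equation*}
Under~\eqref{eq:stochasticCtrb}-\eqref{eq:boundedInvertA} the term $\Hfnc\Rv\Hfnc^\top$ is bounded below, so $\Omega_t^{-1}$ is bounded above; together with the boundedness of $\Afnc,\Cfnc$ and of the filtered gain $\Kfnc_t^{(\tau)}$, the first three factors are uniformly bounded by a constant $c(p)$ over $t\in\mathbb{T}$. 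Theorem~\ref{thm:boundErrorCov} then supplies $\Vert \Delta\Pfnc_t \Vert_2 \leq \xi^\tau \kappa$, with $\xi\in(0,1)$ and $\kappa$ the constant on the right of~\eqref{eq-thm:maxCovarianceBound}, so that $\Vert \Rfnc_t^{(\tau)} \Vert_2 \leq c(p)\kappa\,\xi^\tau \to 0$ as $\tau\to\infty$.

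The main obstacle is the strict monotonicity $\Vert \Rfnc_t^{(\tau)} \Vert_2 > \Vert \Rfnc_t^{(\tau+1)} \Vert_2$. Theorem~\ref{thm:boundErrorCov} furnishes only a monotone \emph{upper bound} $\xi^\tau\kappa$, not a comparison of the true norms, and when $\tau$ increases the initialization~\eqref{eq-thm:initErrCov} shifts from $t-\tau$ to $t-\tau-1$, so $\Delta\Pfnc_t^{(\tau)}$ and $\Delta\Pfnc_t^{(\tau+1)}$ are not obviously nested. The natural route is to show that $\Vert \Delta\Pfnc_t^{(\tau)} \Vert_2$ itself strictly decreases in $\tau$: by Lem.~\ref{lem-app:inn} the covariance error is sandwiched between forward products of the contracting error-dynamics maps, and by Lem.~\ref{lem-app:lemBoundDyn} each extra step contracts the homogeneous dynamics by the strict factor $\xi$. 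Converting this Loewner-ordered one-step-extra contraction into a \emph{strict} inequality on $\Vert \Rfnc_t^{(\tau)} \Vert_2$ — while accounting for the moving initialization and for the fact that the remainder norm need not track $\Vert\Delta\Pfnc_t\Vert_2$ exactly through the trailing factor $\Cfnc^\top(p_t)\Omega_t^{-1}$ — is precisely the gap that keeps the statement a conjecture rather than a theorem.
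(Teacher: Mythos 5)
First, note that the paper contains no proof of this statement: it is posed explicitly as a conjecture (closed with $\square$, not a proof environment), and the surrounding text only motivates it informally via the ``fading memory'' argument of Thm.~\ref{thm:boundErrorCov}. So there is no paper proof to match; your proposal must stand on its own. The parts you do complete are correct and go beyond what the paper establishes. Your closed-form identity is right: writing $\Kfnc_t - \Kfnc_t^{(\tau)}$ as a numerator difference plus an inverse difference and applying the resolvent identity indeed collapses, via $\hat N_t \hat\Omega_t^{-1} = \Kfnc_t^{(\tau)}$ (with $\hat N_t$ the numerator of \eqref{eq-lem:kalmanFilter} evaluated at $\hat\Pfnc^{(\tau)}_{t\vert t-1}$), to $\Rfnc_t^{(\tau)} = \bigl(\Afnc(p_t) - \Kfnc_t^{(\tau)}\Cfnc(p_t)\bigr)\Delta\Pfnc_t\,\Cfnc^\top(p_t)\Omega_t^{-1}$. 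Rationality and the $\xi^\tau$ decay then follow as you argue. Two caveats on the decay step: (i) your claim that \eqref{eq:stochasticCtrb}--\eqref{eq:boundedInvertA} bound $\Hfnc(p_t)\Rv\Hfnc^\top(p_t)$ from below overstates what the paper assumes --- condition \eqref{eq:stochasticObsv} bounds $\Cfnc^\top(\Hfnc\Rv\Hfnc^\top)^{-1}\Cfnc$, which controls $(\Hfnc\Rv\Hfnc^\top)^{-1}$ only on the range of $\Cfnc^\top$, so a uniform bound on $\Omega_t^{-1}$ must be assumed separately (Lem.~\ref{lem:inn} only assumes pointwise nonsingularity of $\Omega_t$); (ii) uniform boundedness of $\Kfnc_t^{(\tau)}$ deserves a word, e.g., via the posterior covariance bounds \eqref{eq-app:boundPostErrCov}.

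The obstacle you name at the end is the genuine gap, and it is exactly why the statement remains a conjecture rather than a theorem. Thm.~\ref{thm:boundErrorCov} supplies only a decaying \emph{envelope} $\xi^\tau\kappa$; this is fully compatible with $\bigl\Vert\Rfnc_t^{(\tau)}\bigr\Vert_2$ failing to decrease at particular values of $\tau$ and $t$, because the filters for $\tau$ and $\tau+1$ are initialized with the same $\hat P_{0,\tau}$ at \emph{different} times, so $\Delta\Pfnc_t^{(\tau)}$ and $\Delta\Pfnc_t^{(\tau+1)}$ are not nested in the Loewner order along an arbitrary scheduling trajectory, and the trailing factor $\Cfnc^\top(p_t)\Omega_t^{-1}$ can reweight directions so that the remainder norm does not track $\Vert\Delta\Pfnc_t\Vert_2$. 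Indeed, the strict pointwise inequality $\bigl\Vert \Rfnc_t^{(\tau)} \bigr\Vert_2 > \bigl\Vert \Rfnc_t^{(\tau+1)}\bigr\Vert_2$ is plausibly false in general without further assumptions; only the upper bound is provably monotone. Your proposal is therefore best read as a correct proof of the decomposition, its rationality, and its asymptotic vanishing, together with a precise identification of what would still need to be shown --- which is more than the paper itself provides, but not a proof of the conjecture as stated.
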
 %\vskip -3mm \noindent %\vskip -7mm \noindent
%\begin{proof}
%\pepijn{Should here be a proof?} \hfill $\blacksquare$
%\end{proof} \vskip -4mm \noindent
%
Conj.~\ref{conj:truncKalman} highlights that $\Kfnc_t$ can be approximated by $\Kfnc_t^{\scriptstyle(\tau)}$, which depends only on $p_{t-\tau},\ldots,p_t$. This truncation can be made arbitrarily accurate by choosing an appropriate $\tau$, i.e., $\big\| \Rfnc_t^{\scriptstyle(\tau+1)} \big\|_2 \ll \big\| \Kfnc_t \big\|_2$.

% ####################################################
% Subsec: LPV-SS innovation from
% ####################################################

\subsection{Static, affine Kalman gain} \label{subsec:staticAffKalman}

A popular choice in LPV-SS identification is to identify an LPV-SS innovation form with a static and affine Kalman filter matrix (e.g., see~\cite{Felici2006,Wingerden2009a}), similarly parametrized as~\eqref{eq:sysMatrices}. Under the assumption that the Kalman filter function can be arbitrarily well approximated by $\Kfnc_t^{\scriptstyle(\tau)}$, the dynamic, rational dependence on the scheduling signal may, in some cases, be transformed into a static, affine LPV-SS representation by adding states, i.e., increasing $\NX$:
\begin{exmp} \label{exmp:dynamic2static}
\begin{subequations}
Consider the following LPV representation
\begin{equation}
y_t = -p_ty_{t-1} + p_tu_{t-1}+e_{t}+e_{t-1}.\label{eq-exmp:LPVIOnoise}
\end{equation}
The state minimal LPV-SS realization of~\eqref{eq-exmp:LPVIOnoise} is
\begin{equation}
x_{t+1}\!=\! -p_tx_t \!+\!  u_t \!+\! \frac{1-p_{t+1}}{p_{t+1}}e_t,\hspace{0.5cm} y_t\! =\! p_tx_t\!+\!e_t,\label{eq-exmp:dynamicLPVSSnoise}
\end{equation}
which is rationally and dynamically dependent on the scheduling parameter $p$. Note, it can be shown that there exists no state transformation (not even $p$-dependent) which can turn~\eqref{eq-exmp:dynamicLPVSSnoise} into~\eqref{eq:SSrepInnState}-\eqref{eq:SSrepInnOut} with static, affine depend $\Kfnc$ on $p$, i.e., $\Kfnc_t=\Kfnc(p_t)$, and keep the minimal state dimension $\NX=1$~\cite[Def. 3.29]{Toth2010a}. However, the transformation into a static, affine form can be done by introducing an additional state as
\begin{equation}
\begin{aligned}
\breve x_{t+1} &\!=\! \left[\begin{array}{cc} -p_t  & 1 \\ 0& 0 \end{array}\right]\!\breve x_t + \left[\begin{array}{cc} -1 & 1 \\0 & 1 \end{array} \right] \left[\begin{array}{c} u_t \\ e_t \end{array} \right]\!,\\
 y_t &\!=\! \left[\begin{array}{cc} -p_t  & 1 \end{array}\hspace{0.05cm}\right]\!\breve x_t+e_t. 
\end{aligned}\label{eq-exmp:staticLPVSSnoise}
\end{equation} 
\end{subequations}
\end{exmp}
Ex.~\ref{exmp:dynamic2static} shows the elimination of dynamic, rational dependence by sacrificing state minimality\footnote{Comparable phenomena can be observed in the LTI case. If it is assumed that $\Sv=0$, however, for the underlying system $\Sv\neq0$, then an increase of the state dimension is also evident~\cite{Anderson1979}.}. Hence, as many LPV-SS identification methods estimate a static, affine functional relation on the scheduling signal~\cite{Felici2006,Wingerden2009a}, the rank relieving property of subspace methods is lost, as additional states are added to preserve the static, affine dependency. As a conclusion, in the LPV case, the Kalman gain $\Kfnc_t$~\eqref{eq-lem:kalmanFilter} should have rational and dynamic dependency on the scheduling signal to enjoy general noise modelling capabilities (Lem.~\ref{lem:inn}) and minimality of the state dimension. However, in practice, we need to restrict overparameterization to reduce complexity of the estimation method and variance of the model estimates. Hence, the above given analysis is important to understand the trade-off behind these choices.

% ############################################### %
% ----------------------------------------------- %
% 		Conclusion
% ----------------------------------------------- %
% ############################################### %

\section{Conclusion}
We have shown that the innovation form~\eqref{eq:SSrepInn} should have a Kalman gain with rational and dynamic dependence on the scheduling signal to represent general noise. However, this function can be approximated by truncating the dynamic dependency. Using this truncation, for some cases, an equivalent LPV-SS representation with affine and static dependency on the scheduling signal can be found by including additional states, resulting in a non-state minimal system. %Consequently, current LPV subspace identification schemes might loose their rank revealing properties.

% if have a single appendix:
\appendix[Innovation representation]
% or
%\appendix  % for no appendix heading
% do not use \section anymore after \appendix, only \section*
% is possibly needed

% use appendices with more than one appendix
% then use \section to start each appendix
% you must declare a \section before using any
% \subsection or using \label (\appendices by itself
% starts a section numbered zero.)
%

%\appendices
%\section{Proof of the First Zonklar Equation}
%Appendix one text goes here.
%
%% you can choose not to have a title for an appendix
%% if you want by leaving the argument blank
%\section{}
%Appendix two text goes here.

The idea of \textit{the innovation process $\boldsymbol\xi_t$ is such that $\boldsymbol\xi_t$ consists of that part of $\mathbf y_t$ not carried in $\mathbf y_{t-1},\mathbf y_{t-2},\ldots$}~\cite{Anderson1979}, i.e.,
\begin{equation} \label{eq-app:sepModel}
\boldsymbol \xi_t = \mathbf y_t - \expct^*\{\mathbf y_t~\vert~Y_{t-1}\},
\end{equation}
where $\expct^*\{\cdot\}$ is the minimum variance estimator and $Y_{t-1}$ indicates the set of observations $\{\mathbf y_{t-1},\ldots,\mathbf y_0\}$. The signal $y_t$ generated by~\eqref{eq:SSrep} is a sequence of Gaussian random variables as $u_t$ is known exactly. Hence, the output signal $y$ is split into a `deterministic' part of $\mathbf y_t$ as $\mathbf{\check y}_t=\expct^*\{\mathbf y_t~\vert~Y_{t-1}\}$ and a white noise $\boldsymbol \xi_t$ with Gaussian distribution. The variables $\mathbf{\check y}_t$ and $\boldsymbol \xi_t$ are uncorrelated, i.e., $\expct\{\mathbf{\check y_i}\boldsymbol \xi_i^\top\}=0$ for $i=0,\ldots,t$ because of the orthogonality property of the minimum variance estimator. Without loss of generality, we assume that $\boldsymbol\xi_0 = \mathbf y_0 - \expct^*\{\mathbf y_0\}$. Hence, as the initial condition is known, there exists a causal filter from $\mathbf y_0,\ldots,\mathbf y_t$ to $\boldsymbol\xi_t$ by writing out~\eqref{eq:SSrep}. The other way around, i.e., that $\mathbf y_t$ depends on $\boldsymbol\xi_0,\ldots,\boldsymbol\xi_t$, can be shown in a recursive way~\cite{Anderson1979}. Therefore, the dataset $\mathbf y_0,\ldots,\mathbf y_t$ and $\xi_0,\ldots,\xi_t$ are uniquely related to each other and the following holds
\begin{equation} \label{eq-app:eqalDist}
\expct\{\mathbf y_t~\vert~\mathbf y_0,\ldots,\mathbf y_{t-1}\} = \expct\{\mathbf y_t~\vert~\boldsymbol\xi_0,\ldots,\boldsymbol\xi_{t-1}\}.
\end{equation}
In addition, for any variable $\mathbf x_t$ which has a joint Gaussian distribution with $\mathbf y_t$ it holds that
\begin{equation} \label{eq-app:eqalDistExtVar}
\mathbf{\check x}_t = \expct\{\mathbf x_t~\vert~\mathbf y_0,\ldots,\mathbf y_{t-1}\} = \expct\{\mathbf x_t~\vert~\boldsymbol\xi_0,\ldots,\boldsymbol\xi_{t-1}\}.
\end{equation}
Substituting~\eqref{eq-app:eqalDist} and~\eqref{eq-app:eqalDistExtVar} into~\eqref{eq-app:sepModel} and taking the output equation relation~\eqref{eq:SSrepInnOut} into account, gives
\begin{equation}
\mathbf y_t = \Cfnc(p_t)\mathbf{\check x}_t + \Dfnc(p_t)u_t+\boldsymbol\xi_t. \label{eq-app:SSrepInnOut}
\end{equation}
We will assume that the initial state $x_0=0$ is known\footnote{This proof can be extended to $x_0\in\mathcal{N}(0,P_0)$. However, it involves additional constraints to ensure that the noise sequences $w$ and $v$ can be causally computed from $y$, see~\cite[Theorem 3.4, Ch. 9]{Anderson1979}. For simplicity, this case will not be considered.}. %In addition, the state estimate $\check x$ is a Gaussian random variable, hence, for the minimum variance estimate it holds that $\expct^*\{x_t~\vert~\xi_0,\ldots,\xi_{t-1}\}=\expct\{x_t~\vert~\xi_0,\ldots,\xi_{t-1}\}$.

As $\boldsymbol\xi_0,\ldots,\boldsymbol\xi_{t+1}$ are mutually uncorrelated, the conditional expectation~\eqref{eq-app:eqalDistExtVar} can be split up, e.g., see~\cite[Theorem 2.4, Ch. 5]{Anderson1979}, and combined with~\eqref{eq:SSrepInnState}, which gives
\begin{align}
\mathbf{\check x}_{t+1}\! &=\! \expct\{ \mathbf x_{t+1} ~\vert~\boldsymbol\xi_0,\ldots,\boldsymbol\xi_{t-1} \}\! +\! \expct\{ \mathbf x_{t+1} ~\vert~\boldsymbol\xi_t \}\!-\!\expct\{ \mathbf x_{t+1}\}\!, \nonumber \\[1.5mm]
			  \! &=\!  \Afnc(p_t)\mathbf{\check x}_t\! +\! \Bfnc(p_t)u_t\! +\! \expct\{ \mathbf x_{t+1} \!~\vert~\!\boldsymbol\xi_t \}\!-\!\expct\{\mathbf  x_{t+1}\}\!. \label{eq-app:stateEstExpc}
\end{align}
Note that $\mathbf x_{t+1}$ is uncorrelated with $\boldsymbol\xi_{t-i}$ for $i>0$. 
The state $\mathbf x_{t+1}$ and $\boldsymbol\xi_t$ are jointly Gaussian distributed variables, hence
\begin{multline} \label{eq-app:minVarEst}
\expct\{\mathbf x_{t+1} ~\vert~\boldsymbol\xi_t \} = \expct\{\mathbf x_{t+1}\} + \\ \cov\left[\mathbf x_{t+1},\boldsymbol\xi_t\right]\var\left[\boldsymbol\xi_t\right]^{-1}\left(\boldsymbol\xi_t-\expct\{\boldsymbol\xi_t\}\right),
\end{multline}
by using the minimum variance estimator property, e.g., see~\cite[Theorem 2.1, Ch. 5]{Anderson1979}. Define the error of the state estimate by $\mathbf{\tilde x}_t=\mathbf x_t-\mathbf{\check x}_t$. To compute $\cov\left[\mathbf x_{t+1},\boldsymbol\xi_t\right]$, see:
\begin{multline} \label{eq-app:covStateNoise}
\cov\left[\mathbf x_{t+1},\boldsymbol\xi_t\right] = \cov\left[\mathbf x_{t+1},\Cfnc(p_t)\mathbf{\tilde x}_t+\Hfnc(p_t)\mathbf v_t\right] \\ 
\!=\!\expct\!\left\{\! \left[\Afnc(p_t)\!\left(\mathbf x_t\!-\! \expct\{\mathbf x_t\}\! \right)\! +\! \Gfnc(p_t) \mathbf w_t \right]\! \left[ \Cfnc(p_t)\mathbf{\tilde x}_t\!+\!\Hfnc(p_t)\mathbf v_t  \right]^\top \!\right\} \\
= \Afnc(p_t)\Pfnc_{t\vert t-1}\Cfnc^\top\!\!(p_t) + \Gfnc(p_t)\Sv\Hfnc^\top\!\!(p_t),
\end{multline}
where $\Pfnc_{t\vert t-1} = \var\left[\mathbf{\tilde x}_t \right]$ is the a priori state error covariance. To compute $\var\left[ \boldsymbol\xi_t \right]$, note that~\eqref{eq:SSrepOut} and~\eqref{eq-app:SSrepInnOut} are equal in $\mathbf y_t$, hence, by using the transitive property of equality, the variance of $\boldsymbol\xi_t$ is given as
\begin{align}
\Omega_t &=\var\left[ \boldsymbol\xi_t \right] = \var\left[ \Cfnc(p_t)\mathbf{\tilde x}_t + \Hfnc(p_t)\mathbf v_t \right] \nonumber \\
&= \Cfnc(p_t)\Pfnc_{t\vert t-1}\Cfnc^\top\!\!(p_t) + \Hfnc(p_t)\Rv\Hfnc^\top\!\!(p_t). \label{eq-app:noiseCov}
\end{align}
Substituting~\eqref{eq-app:minVarEst},~\eqref{eq-app:covStateNoise}, and~\eqref{eq-app:noiseCov} in~\eqref{eq-app:stateEstExpc} gives
\begin{subequations}
\begin{align}
\mathbf{\check x}_{t+1}\! &= \!\Afnc(p_t)\mathbf{\check x}_t+\Bfnc(p_t)u_t+\Kfnc_t\boldsymbol\xi_t, \label{eq-app:SSrepInnState}\\
\Kfnc_t \!&= \!\left[ \Afnc(p_t) \Pfnc_{t\vert t-1} \Cfnc^\top\!\!(p_t)\! +\! \Gfnc(p_t)\Sv\Hfnc^\top\!\!(p_t)\right]\! \Omega_t^{-1}\!.\label{eq-app:kalmanFilter}
\end{align}
\end{subequations}
Finally, the a priori state error covariance $\Pfnc_{t\vert t-1}$ should be found. Subtracting~\eqref{eq-app:SSrepInnState} from~\eqref{eq:SSrepState} gives
\begin{multline}
\mathbf{\tilde x}_{t+1} = \Afnc(p_t)\mathbf{\tilde x}_t + \Gfnc(p_t)\mathbf w_t-\Kfnc_t\boldsymbol\xi_t  \\
\!=\!\left[\Afnc(p_t)\!-\!\Kfnc_t\Cfnc(p_t)\right]\mathbf{\tilde x}_t \!+\! \Gfnc(p_t)\mathbf w_t\!-\!\Kfnc_t\Hfnc(p_t)\mathbf v_t.
\end{multline}
Then
\begin{align}
&\Pfnc_{t+1\vert t} = \left[\Afnc(p_t)-\Kfnc_t\Cfnc(p_t)\right]\Pfnc_{t\vert t-1}\left[\Afnc^\top\!\!(p_t)-\Cfnc^\top\!\!(p_t)\Kfnc^\top_t\right] \nonumber\\
&\qquad+\Gfnc(p_t)\Qv\Gfnc^\top\!\!(p_t)+\Kfnc_t\Hfnc(p_t)\Rv\Hfnc^\top\!\!(p_t)\Kfnc^\top_t\nonumber\\
&\qquad-\Gfnc(p_t)\Sv\Hfnc^\top\!\!(p_t)\Kfnc^\top_t - \Kfnc_t\Hfnc(p_t)\Sv^\top\Gfnc^\top\!\!(p_t) \nonumber\\
&=\Afnc(p_t)\Pfnc_{t\vert t-1}\Afnc^\top\!\!(p_t)+\Gfnc(p_t)\Qv\Gfnc^\top\!\!(p_t)+\Kfnc_t\Omega_t\Kfnc_t \nonumber\\
&\qquad-\Kfnc_t[\Cfnc(p_t)\Pfnc_{t\vert t-1}\Afnc^\top\!\!(p_t) +\Hfnc(p_t)\Sv^\top \Gfnc^\top\!\!(p_t)]\nonumber\\
&\qquad-[\Afnc(p_t)\Pfnc_{t\vert t-1}\Cfnc^\top\!\!(p_t) +\Gfnc(p_t)\Sv \Hfnc^\top\!\!(p_t)] \Kfnc^\top_t,\label{eq-app:errCov}
\end{align}
Combining~\eqref{eq-app:kalmanFilter} and~\eqref{eq-app:errCov} gives~\eqref{eq-lem:innErrorCov}.

% use section* for acknowledgment
% \section*{Acknowledgment}
% The authors would like to thank...

% Can use something like this to put references on a page
% by themselves when using endfloat and the captionsoff option.
\ifCLASSOPTIONcaptionsoff
  \newpage
\fi

\bibliographystyle{IEEEtran}
\bibliography{library.bib}           % and a bib file to produce the 
\end{document}